\def\@seccntformat#1{\csname the#1\endcsname.\quad}
\newcommand{\N}{\mathbb N}
\newcommand{\ST}{\mathrm{ST}}
\newcommand{\I}[3]{
  \draw [|-|] (#2)  -- ++(#3, 0);
  \node[above]  at ++ ($(#2) + (#3 / 2, 0)  $) {\ensuremath{#1}}
}
\newtheorem {theorem}{Theorem}
\newtheorem {lemma}[theorem]{Lemma}
\newtheorem {proposition}[theorem]{Proposition}
\newtheorem {claim}[theorem]{Claim}
\newproof{proof}{Proof}
\begin{document}

\title{Dynamic Algorithms for Interval Scheduling on a Single Machine}

\author{Alexander Gavruskin}
\author{Bakhadyr Khoussainov}
\author{Mikhail Kokho}
\author{Jiamou Liu}


\maketitle

\begin{abstract}
We investigate dynamic algorithms for the interval scheduling problem. Our algorithm runs in amortised time  $O(\log n)$ for query operation and $O(d\log^2 n)$ for insertion and removal operations, where $n$ and $d$ are the maximal numbers of intervals and pairwise overlapping intervals respectively. We also show that for a monotonic set, that is when no interval properly contains another interval, the amortised complexity is $O(\log n)$ for both query and update operations. We compare the two algorithms for the monotonic interval sets using experiments.
\end{abstract}

\section{Introduction}

Imagine a number of processes all need to use a particular resource for a period of time. Each process $i$ specifies a starting time $s(i)$ and a finishing time $f(i)$ between which it needs to continuously occupy the resource. The resource cannot be shared by two processes at any instance. One is required to design a scheduler which chooses a subset of these processes so that 1)\ there is no time conflict between processes in using the resource; and 2)\ there are as many processes as possible that get chosen.

\smallskip

The above is a typical set-up for the interval scheduling problem, one of the basic problems in the study of algorithms. Formally, given a collection of intervals on the real line specified by their starting and finishing times, the problem asks for a subset of maximal size consisting of pairwise non-overlapping intervals. The interval scheduling problem and its variants appear in a wide range of areas in computer science and applications such as in logistics, telecommunication, and manufacturing. They form an important class of scheduling problems and have been studied under various names and with application-specific constraints~\cite{survey07}.

\smallskip

The interval scheduling problem, as stated above, can be solved by a {\em greedy} scheduler as follows~\cite{kleinberg}. The scheduler sorts intervals based on their finishing time, and then iteratively selects the interval with the least finishing time that is compatible with the intervals that have already been scheduled.
 The set of intervals chosen in this manner is guaranteed to have maximal size. This algorithm works in a {\em static} context in the sense that the set of intervals is given a priori and it is not subject to change.

\smallskip

 In a {\em dynamic} context the instance of the interval scheduling problem is usually changed by a real-time events, and a previously optimal schedule may become not optimal. Examples of such real-time events include job cancelation, arrival of an urgent job, and change in job processing time. To avoid the repetitive work of rerunning the static algorithm every time when the problem instance has changed, there is a demand for efficient {\em dynamic algorithms} for solving the scheduling  problem on the changed instances. In this dynamic context, the set of intervals change through a number of {\em update operations} such as insertion or removal. Our goal  is to design algorithms that allow us to solve the interval scheduling problem in a dynamic setting.

\smallskip

A natural setting for the problem is a  special class of interval sets, which we call {\em monotonic interval sets}. In a monotonic set no interval is properly contained by another interval. For example, if all processes require the same amount of time to be completed, then the set of intervals is monotonic. Moreover, monotonic interval sets are closely related to proper interval graphs. An {\em interval graph} is an undirected graph whose nodes are intervals and two nodes are adjacent if the two corresponding intervals overlap. A {\em proper interval graph} is an interval graph for a monotonic set of intervals. There exist linear time algorithms for representing a proper interval graph by a monotonic set of intervals~\cite{piglinear2004,newpig,piglinear1996}. Furthermore, solving the interval scheduling problem for monotonic intervals corresponds to finding a maximal independent set in a proper interval graph.

\subsection{Related work.} On a somewhat related work, S. Fung, C. Poon and F. Zheng~\cite{fpz} investigated an online version of interval scheduling problem for weighted intervals with equal length (hence, the intervals are monotonic), and designed  randomised algorithms. We also mention that R. Lipton and A. Tompkins~\cite{lt} initiated the study of online version of the interval scheduling problem. In this version a set of intervals are presented to a scheduler in order of start time. Upon seeing each interval the algorithm must decide whether  to include the interval into the schedule.

A related problem on a set of intervals $I$ asks to find a minimal set of points $S$ such that every interval from $I$ contains at least one point from $S$. Such a set $S$ is called a {\em piercing set} of $I$. A dynamic algorithm for maintaining a minimal piercing set $S$ is studied in \cite{pset}. The dynamic algorithm runs in time $O(|S|\log |I|)$. We remark here that if one has a maximal set $J$ of disjoint intervals in $I$, one can use $J$ to find a minimal piercing set of $I$, where each point in the piercing set corresponds to the finishing time of an interval in $J$ in time $O(|J|)$. Therefore
our dynamic algorithm can be adapted to one that maintains a minimal piercing set. Our algorithm improves the results in \cite{pset} when the interval set $I$ is monotonic.

Kaplan et al. in \cite{nested} studied a problem of maintaining a set of nested intervals with priorities. The problem asks for an algorithm that given a point $p$ finds the interval with maximal priority containing $p$. Similarly to our dynamic algorithm, the solution in \cite{nested} also uses dynamic trees to represent a set of intervals.

\subsection{Our results.} For the monotonic case, we provide two dynamic algorithms solving the interval scheduling problem. The first algorithm has $O(\log^2 n)$ amortised complexity for update operations and $O(\log n)$ amortised complexity for the query operations. The second algorithm improves the complexity of update operations to $O(\log n)$ amortised. For the general case, we extend the first algorithm. The complexity of the query operation remains the same, while the complexity of the update operation increases by the factor of $d$, where $d$ is the maximal number of overlapping intervals. Formal explanation  are in the next sections.

The first algorithm maintains  the {\em compatibility forest} data structure denoted  by $\mathsf{CF}$.  We say the {\em right compatible interval} of an interval $i$ is the interval $j$ such that $f(i) < s(j)$ and there does not exist an interval $\ell$ such that $f(i) < s(\ell)$ and $f(\ell) < f(j)$. The $\mathsf{CF}$ data structure maintains the right compatible interval relation. The implementation of the data structure utilises, nontrivially, the dynamic tree data structure of Sleator and Tarjan \cite{tarjan83}. 
As a result, in {\bf Theorem~\ref{thm:CF-amortized-time}} and  {\bf Theorem~\ref{thm:CF-gen-amortized-time}}   we prove the amortised bounds for the monotonic and non-monotonic interval sets respectively. 

The second dynamic algorithm maintains the {\em linearised tree} data structure denoted  by $\mathsf{LT}$. We say that intervals are {\em equivalent} if their right compatible intervals coincide. The $\mathsf{LT}$ data structure maintains both the right compatibility relation and the equivalence relation.Then, in {\bf Theorem~\ref{thm:LT-amortized-time}} of Section~\ref{sec:LT} we prove that the insertion, removal and query operations take time amortised $O(\log n)$. 
However, this comes with a cost. As opposed to the $\mathsf{CF}$ data structure that keeps a representation of an optimal set after each update operation, the linearised tree data structure does not explicitly represent the optimal solution.

To test the performance of our algorithms for the monotonic interval sets, we carried out experiments on random sequences of update and query operations.  The experiments show that  the two data structures $\mathsf{CF}$ and $\mathsf{LT}$ perform similarly. The reason for this is that the first dynamic algorithm based on $\mathsf{CF}$  reaches the bound of $\log^{2}n$ only on specific sequences of operations, while on uniformly random sequences the algorithm may run much faster.

\smallskip

\noindent {\em Organisation of the paper.}  Section~\ref{sec:pre} introduces  the problem and monotonic interval sets. Section~\ref{sec:CF} describes the compatibility forest data structure and algorithms for monotonic and non-monotonic interval sets.
Section~\ref{sec:LT} describes the linearsed tree data structures and present our second dynamic algorithm, which is based on the linearised tree. 
Section~\ref{sec:ex} discusses the experiments.

\section{Preliminaries}\label{sec:pre}

\noindent{\em Interval scheduling basics.} An {\em interval} is a pair $(s(i),f(i))\in \mathbb{R}^2$ with $s(i) < f(i)$, where $s(i)$ is the {\em starting time} and $f(i)$ is the {\em finishing time} of the interval. We abuse notation and write $i$ for the interval $(s(i), f(i))$. \    Two intervals $i$ and $j$ are {\em compatible} if $f(i) < s(j)$ or $f(j) < s(i)$. Otherwise, these two intervals  {\em overlap}. Given a collection of intervals $I=\{i_1, i_2,\ldots, i_k\}$, a {\em compatible set} of $I$ is a subset $J\subseteq I$ such that the intervals in $J$ are pairwise compatible. An {\em optimal  set} of $I$ is a compatible set of maximal size. The {\em interval scheduling problem} consists of designing an algorithm that  finds an optimal set.

We recall the greedy algorithm that solves the problem \cite{kleinberg}. The algorithm sorts intervals by their finishing time, and then iteratively chooses the interval with the least finishing time  compatible with the last selected interval. The set of thus selected intervals is optimal. The algorithm takes $O(n\log n)$ worst-case time where $n$ is the size of $I$. If the sorting is already given then the algorithm runs in linear time. Below, we formally define the greedy optimal set found by this greedy algorithm.

\smallskip

Let  $\preceq$ \ be the ordering of the intervals by their finishing time. Throughout, by the {\em least interval}, the {\em greatest interval}, the {\em next interval}, the {\em previous interval}, we mean the least, greatest, next and previous interval with respect to  $\preceq$.  Without loss of generality we may assume that the intervals in $I$ have pairwise distinct finishing times. Given the collection $I$, we inductively define the set $J=\{i_1,i_2, \ldots\}$,  the {\em greedy optimal set} of $I$, as follows. The interval $i_1$ is the least interval in $I$. The interval $i_{k+1}$ is the least interval compatible with $i_k$ such that $i_k \prec i_{k+1}$. The set $J$ obtained this way is an optimal set \cite{kleinberg}.

The set $I$ of intervals is called {\em monotonic} if no interval in $I$ contains another interval. The  {\em right compatible interval} of $i$, denoted by $\mathsf{rc}(i)$, is the least interval $j$ compatible with $i$ such that $i \prec j$. Similarly, the {\em left compatible interval} of $i$, written $\mathsf{lc}(i)$, is the greatest interval $j$ compatible with $i$ such that $j \prec i$.

\subsection{Data Structures}
\smallskip
\noindent {\em Binary Search Tree.} A {\em binary search tree} is a standard data structure that maintains a linearly ordered collection of records. The data structure supports the operations
\begin{align*}
&\mathsf{insert}(T,u), \  \mathsf{delete}(T,u), \  \mathsf{find}(T,u), \\
&\mathsf{predecessor}(u), \ \mathsf{successor}(u), \ \mathsf{maximum}(T), \ \text{ and }\ \mathsf{minimum}(T),
\end{align*}
where $T$ is a binary search tree and $u$ is an element from the domain.
If a binary search tree is balanced,  the complexity of all the above operations is $O(\log n)$ where $n$ is the number of elements in the collection. We point out that there are well-known self-balancing binary search tree data structures such as AVL tree and red-black tree.


\medskip

\noindent {\em Splay Tree.}  A {\em splay tree} is also a self-balancing binary search tree for storing linearly ordered objects. In addition to the operations for binary search trees, the splay tree data structure also supports the following operations.
\begin{itemize}
\item $\mathsf{splay}(u)$: This operation reorganises a splay tree so that $u$ becomes the root.
\item $\mathsf{join}(A,B)$: This operation joins two splay trees $A,B$ into one splay tree,  where any interval in $A$ is less than any interval  in $B$, into one tree.
\item $\mathsf{split}(A,u)$: This operation splits the splay tree $A$ of $u$ into two new splay trees
\[
\mbox{$R(u)=\{x\in A\mid u \leq x\}$ \ and \ $L(u)=\{x\in A\mid x < u\}$.}
\]
\end{itemize}
All the operations for splay trees take $O(\log n)$ amortised time~\cite{tarjan85}.

\medskip

\noindent {\em Dynamic Trees.} A dynamic tree data structure maintains a collection of objects that are stored in a number of rooted trees, viewed as directed graphs with edges pointing from children to parents. The trees can be manipulated using the following operations:

\begin{itemize}
\item $\mathsf{link}(v,u)$: If $v$ is the root of a tree and $u$ is a node in another tree, add an edge from $v$ to $u$ and thus ``link'' the trees containing $v$ and $u$ together.
\item $\mathsf{cut}(v)$: If $v$ is not the root of a tree, delete the edge from $v$ to its parent and thus divide the tree containing $v$ into two.
\end{itemize}

These operations have $O(\log n)$ amortised time complexity~\cite{tarjan83}.

\subsection{Problem Setup}
In this setting the collection $I$ of intervals changes over time. Thus, the input to the problem is an arbitrary sequence $o_1, \dots, o_m$ of update and query operations described as follows:

\begin{itemize}
\item  {\em Update operations}:
$\mathsf{insert}(i)$ inserts an interval $i$ and
$\mathsf{remove}(i)$ removes an interval $i$.
\item {\em Query operation}: The operation
$\mathsf{query}(i)$ returns {\sf true} if $i$ belongs to the greedy optimal set and {\sf false} otherwise.
\end{itemize}

Our goal is to design algorithms for performing these operations that minimise the total running time.

\section{Compatibility Forest Data Structure}\label{sec:CF}

In this section we define compatibility forest and describe how to maintain efficiently maintain it for a set of intervals. We first show how to represent a monotonic set, and then we extend the algorithms for the general case.

\subsection{Definition of Compatibility Forest}
Let $I$ be a set of intervals. We define the {\em compatibility forest} as a graph $\mathcal{F}(I)=(V,E)$ where $V=I$ and $(i,j)\in E$ if $j = \mathsf{rc}(i)$. By a forest we mean a directed graph where the edge set contains links from nodes to their parents. We denote the parent of a node $v$ by $p(v)$.  The {\em roots} and {\em leaves} are standard notions that we do not define. Figure~\ref{fig:compat} shows an example of a monotonic set of intervals with its compatibility forest. We note that for every forest one can construct in a linear time a monotonic set of intervals whose compatibility forest coincides (up to isomorphism) with the forest.

\begin{figure}[!ht]
\centering
\begin{tikzpicture}[|-|, scale=0.5, xshift=10, every node/.style={above left}]
		\draw (0.5,3) -- (3,3) node {$a$} ;
		\draw (1,2)   -- (4,2) node {$b$};
		\draw (1.5,1) -- (6,1) node {$c$};
		\draw (2,0)   -- (7.2,0) node {$d$};
		
		\draw (4.5,3) -- (8,3) node {$e$};
		\draw (5.0,2) -- (9,2) node {$f$};
		\draw (6.8,1) -- (11,1) node {$g$};
		
		\draw (10,3) -- (12,3) node {$h$};	
\end{tikzpicture}
\hspace{20mm}		
\begin{tikzpicture}[<-, level distance=7mm, sibling distance = 8mm]
		\node (g) {$g$}
			child {node {$c$}};

		\node (h) [right=15mm] {$h$}
			child {node {$d$}}
			child {node {$e$}
				child {node {$a$}}
				child {node {$b$}}}
			child {node {$f$}};				
\end{tikzpicture}
\caption{\label{fig:compat}Example of a monotonic set of intervals and its compatibility forest.}	
\end{figure}
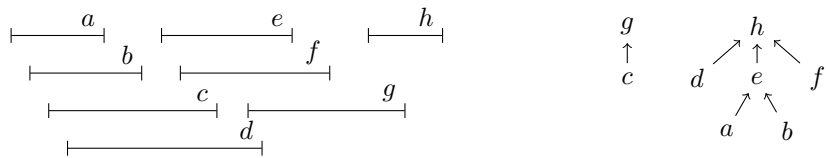

A {\em path} in the compatibility forest $\mathcal{F}(I)$ is a sequence of nodes $i_1,i_2,\ldots,i_k$ where $(i_t,i_{t+1})\in E$ for any $t=1,\ldots,k-1$. It is clear that any path in the forest $\mathcal{F}(I)$ consists of compatible intervals.
Essentially, the forest $\mathcal{F}(I)$ connects nodes by the greedy rule: for any node $i$ in the forest $\mathcal{F}(I)$, if the greedy rule is applied to $i$, then the rule selects the parent $j$ of $i$ in the forest.  Hence, the longest paths in the compatibility forest correspond to an optimal sets of $I$. In particular, the path starting from the least interval is the greedy optimal set. Our first dynamic algorithm amounts to maintaining this path in the forest $\mathcal{F}(I)$.

We explain how we maintain paths in the compatibility forest $\mathcal{F}(I)$. The representation of the forest is developed from the dynamic tree data structure  as in~\cite{tarjan83}.  The idea is to partition the compatibility forest into a set of node-disjoint paths. Paths are defined by two types of edges, {\em solid edges} and {\em dashed edges}.
Each node in the compatibility forest is required to have at most one incoming solid edge.  A sequence of edges $(u_0,u_1),(u_1,u_2),\ldots,(u_{k-1},u_k)$ where each $(u_i,u_{i+1})$ is a solid edge is called a {\em solid path}. A solid path is {\em maximal} if it is not properly contained in any other solid path. Therefore, the solid edges in $\mathcal{F}(I)$ form several maximal solid paths in the forest. Furthermore, the data structure ensures that each node belongs to some maximal solid path. There is an important subroutine in the dynamic tree data structure called the {\em expose} operation~\cite{tarjan83}. The operation starts from a node $v$ and traverses the path from $v$ to the root: while traversing, if the edge $(x, p(x))$ is dashed, we declare $(x,p(x))$ solid and  declare the incoming solid edge (if it exists) incident to $p(x)$ dashed.
Thus, after exposing  node $v$, all the edges on the path from $v$ to the root become solid.  Note that in $\mathsf{CF}$ data structure the $p(x)$ and $\mathsf{rc}(x)$ are the same.

\subsection{Compatibility forest of a monotonic set of intervals}

We denote the representation of $\mathcal{F}(I)$ for the monotonic interval set by $\mathsf{MonCF}$. The representation consists of two components. The first is a binary search tree $T(I)$. The nodes of $T(I)$ are intervals in $I$ ordered by their starting time. Note that monotonicity of $I$ implies that the order of intervals in $T(I)$ coincide with $\preceq$, order of intervals by their finishing time. In addition to standard operations of binary search trees, we define the $\mathsf{right\_compatible}$ operation. Given an interval, $i$ the operation returns $\mathsf{rc}(i)$, if it is in $I$, or $\mathsf{nil}$, otherwise. The second component is a set of splay trees. Each splay tree  stores the nodes of a maximal solid path in the compatibility forest $\mathcal{F}(I)$ 
We denote by $\ST_i$ the splay tree containing the interval~$i$.

\medskip

\begin{algorithm}[H]
	\caption{$\mathsf{right\_compatible}(i)$ }\label{alg:rc}
	\begin{algorithmic}[1]
        \State $r\leftarrow \mathsf{nil}$
        \State $j\leftarrow$ the root in the interval tree $T(I)$.
        \While{$j\neq \mathsf{nil}$}
            \If{$j\preceq i$ or $j$ overlaps $i$}
                \State $j\leftarrow$ the right child of $j$
            \Else
                \State $r\leftarrow j$
                \State $j\leftarrow$ the left child of $j$
            \EndIf
        \EndWhile
        \State \textbf{return} $r$
	\end{algorithmic}
\end{algorithm}

\begin{lemma}\label{lem:rc}
On monotonic set $I$ of intervals the operation $\mathsf{right\_compatible}(i)$ run in time $\Theta(\log n)$ and return $\mathsf{rc}(i)$.
\end{lemma}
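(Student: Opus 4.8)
The plan is to separate the claim into correctness and running time, and to recognise the loop of Algorithm~\ref{alg:rc} as the standard successor search in a binary search tree, specialised via monotonicity.

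First I would reformulate the target. Since $I$ is monotonic, no interval contains another, so ordering intervals by starting time agrees with $\preceq$; in particular, for $j_1,j_2\in I$ we have $s(j_1)<s(j_2)$ iff $f(j_1)<f(j_2)$. Using this I would show that for a node $j\in I$ the test in line~4 fails --- i.e.\ $j\not\preceq i$ and $j$ does not overlap $i$ --- exactly when $f(i)<s(j)$. Indeed, $j\not\preceq i$ gives $i\prec j$, hence $s(i)<s(j)$ by monotonicity, which rules out $f(j)<s(i)$, so ``$j$ does not overlap $i$'' collapses to $f(i)<s(j)$; conversely $f(i)<s(j)$ immediately yields compatibility and, again by monotonicity, $i\prec j$. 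Consequently the set of intervals for which the else-branch is taken is exactly $\{\,j\in I : s(j)>f(i)\,\}$, and by the agreement of the two orders its $\preceq$-least element is $\mathsf{rc}(i)$ (the set is empty precisely when $\mathsf{rc}(i)$ is undefined).

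Next I would establish the loop invariant. The predicate $P(j)\equiv[\,s(j)>f(i)\,]$ is monotone along the search-tree order: if $j_1$ precedes $j_2$ in $T(I)$ and $P(j_1)$ holds then $P(j_2)$ holds, so the nodes of $T(I)$ satisfying $P$ form an up-set. Algorithm~\ref{alg:rc} is then the textbook routine returning the minimum of such an up-set: descending from the root, whenever $P(j)$ fails the answer lies in the right subtree of $j$, and whenever $P(j)$ holds $j$ is a candidate while every smaller candidate lies in the left subtree, so we record $j$ in $r$ and recurse left. Formally I would prove by induction on the number of iterations that $r$ is the $\preceq$-least node already visited that satisfies $P$, and that, if $\mathsf{rc}(i)$ exists, it equals the $\preceq$-minimum of $\{r\}$ together with the nodes of the subtree currently rooted at $j$ that satisfy $P$; when the loop exits with $j=\mathsf{nil}$ this pins down $r=\mathsf{rc}(i)$, or $r=\mathsf{nil}$ when no compatible later interval exists.

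Finally, the running time: each iteration does $O(1)$ work and moves from a node to one of its children, so the loop runs at most $\mathrm{height}(T(I))+1$ times; since $T(I)$ is maintained as a balanced binary search tree (e.g.\ AVL or red--black), its height is $O(\log n)$, giving the $O(\log n)$ upper bound, and for the matching $\Omega(\log n)$ it suffices to take an input on which the search descends to a leaf (e.g.\ $\mathsf{rc}(i)$ stored at maximum depth). I do not expect a genuine obstacle here; the only delicate point is the case analysis showing that the compound condition in line~4 is exactly the negation of $f(i)<s(j)$, which is where monotonicity of $I$ is essential --- on a non-monotonic set the clause $f(j)<s(i)$ can no longer be discarded.
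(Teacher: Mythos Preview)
Your argument is correct and takes a somewhat different route from the paper's. The paper relies on the observation that in a monotonic set, whenever $i$ overlaps $j$, every interval between them overlaps both; it then carries the loop invariant ``if $\mathsf{rc}(i)$ exists, it lies in the current subtree or already equals $r$'' through a case split on line~5 versus lines~7--8. You instead collapse the compound test in line~4 to the single threshold $s(j)>f(i)$, after which the whole procedure is just the textbook successor search in a BST keyed by starting time, and monotonicity enters once more to identify the $s$-minimum with the $\preceq$-minimum. Your reduction is cleaner and makes the role of monotonicity more transparent; the paper's version stays closer to the code and is perhaps easier to adapt to the non-monotonic case treated later. One small remark on the $\Omega(\log n)$ bound: the loop never terminates early---it always walks to a $\mathsf{nil}$ child---so the running time is the length of some root-to-leaf path, which in a balanced tree is $\Omega(\log n)$ on every input, not just on a carefully chosen one; your phrasing suggests the latter. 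The paper is similarly casual here.
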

To prove the lemma we observe that for a monotonic set $I$ of intervals and $i,j \in I$, if $i$ overlaps $j$, then each of the intervals between $i$ and $j$ overlaps both $i$ and $j$.

\begin{proof} For the complexity, note that the length of  paths from a leaf to the root in $T(I)$ is $\left \lfloor \log n \right \rfloor + 1$. Thus,  the operation takes time $\Theta(\log n)$.

For the correctness, we use the following loop invariant: \
{\em If $I$ contains $\mathsf{rc}(i)$, then the subtree rooted at $j$ contains $\mathsf{rc}(i)$ or $r$ equals~$\mathsf{rc}(i)$. }

Initially,  $j$ is the root of $T(I)$, so the invariant holds.  Each iteration of the \textbf{while} loop executes either line 5 or lines 7-8 of Alg.~\ref{alg:rc}. If line 5 is executed, then we have $j\preceq i$ or $j$ overlaps $i$. If $j\preceq i$ then all intervals in the left subtree of $j$ are less than $i$. If $j\succeq i$ but $j$ overlaps $i$, then by the observation above,  all intervals between $i$ and $j$ overlap $i$.  In both cases, none of the intervals in the left subtree of $j$ is $\mathsf{rc}(i)$. Therefore setting $j$ to be the right child of $j$ preserves the invariant.

If lines 7-8 are executed, then we have $j\succeq i$ and $j$ is compatible with $i$. If there exists an interval that is less than $j$ and compatible with $i$, then such an interval is in the left subtree of $j$. If such an interval does not exist, $j$ is the smallest interval which is compatible with $i$. Therefore setting $r$ to be $j$ and $j$ to be the right child of $j$ preserves the invariant.

Thus, the algorithm outputs $\mathsf{rc}(i)$ if it exists and outputs $\mathsf{nil}$ otherwise.
Indeed, the loop terminates when $j = \mathsf{nil}$. Hence if the set of intervals $I$ contains $\mathsf{rc}(i)$ then $r = \mathsf{rc}(i)$. If $I$ does not contain $\mathsf{rc}(i)$ then line 5 is executed at every iteration, so $r = \mathsf{nil}$.
\qed
\end{proof}

We now describe algorithms for maintaining compatibility forest data structure.
We call the algorithms $\mathsf{queryMonCF}$, $\mathsf{insertMonCF}$ and $\mathsf{removeMonCF}$  for the query, insertion, and removal operations, respectively.

\smallskip

\noindent {\em The operation $\mathsf{queryMonCF}$:} To perform this operation on an interval $i$, we first find in the interval tree $T(I)$ the minimum element $m$. We then check if $i$ belongs to the splay tree $\ST_m$.  We return $\mathsf{true}$ if $i \in \ST_m$; otherwise we return $\mathsf{false}$.

\begin{algorithm}[H]
	\caption{$\mathsf{queryMonCF}(i)$}\label{alg:CF-query}
	\begin{algorithmic}[1]
        \State $m \leftarrow \mathsf{minimum}(T(I))$
	\State {\bf return} $\mathsf{find}(\ST_m, i) = i$
    \end{algorithmic}
\end{algorithm}

\noindent {\em The operation $\mathsf{expose}$:} To expose an interval $i$, we find  the maximum element
$j$ in the splay tree $\ST_i$.  Then find the right compatible interval $i'=\mathsf{rc}(j)$.  If $i'$ does not exist (that is,
$j$ is a root in the compatibility forest),  we stop the process. Otherwise, $(j,i')$ is a dashed edge. We split the splay tree at $i'$ into trees $L(i')$ and $R(i')$ and join $\ST_i$ with $R(j')$. We then repeat the process taking $i'$ as~$i$.

\begin{algorithm}[H]
	\caption{$\mathsf{expose}(i)$}\label{alg:expose}
	\begin{algorithmic}[1]
        \State $j\leftarrow \mathsf{maximum}(\ST_i)$
        \State $i'\leftarrow \mathsf{right\_compatible}(j)$
        \While {$i'$ is not $\mathsf{nil}$}
	       \State $\mathsf{split}(\ST_{i'}, i')$
           \State $\mathsf{join}(\ST_i, R(i'))$
           \State $j\leftarrow \mathsf{maximum}(\ST_{i'})$
           \State $i'\leftarrow \mathsf{right\_compatible}(j)$
        \EndWhile
    \end{algorithmic}
\end{algorithm}

\noindent {\em The operation $\mathsf{insertMonCF}$:} To insert an interval $i$, we add $i$ into the tree $T(I)$. Then we locate the next interval $r$ of $i$ in the ordering $\preceq$. If such $r$ exists, we access  $r$ in the splay tree $\ST_r$ and find the interval $j$ such that $(j,r)$ is a solid edge. If such a $j$ exists and $j$ is compatible with $i$, we delete the edge $(j,r)$ and create a new edge $(j,i)$ and declare it solid.  We restore the longest path of the compatibility forest by exposing the least interval in $T(I)$.

\begin{algorithm}[H]
	\caption{$\mathsf{insertMonCF}(i)$}\label{alg:CF-insert}
	\begin{algorithmic}[1]
        	\State $\mathsf{insert}(T(I), i)$  			
	       	\State $r \leftarrow \mathsf{next}(i)$ \Comment Find the next interval of $i$
		\If {$r \neq \mathsf{nil}$}
	       		\State $j \leftarrow \mathsf{predecessor}(\ST_r, r)$ \Comment Find a solid edge $(j,r)$
			\If {$j \neq \mathsf{nil}$ and $j$ is compatible with $i$}       	  		
				\State $\mathsf{split(\ST_{r}, r)}$  \Comment Destroy the solid edge $(j,r)$
			\EndIf
		\EndIf
		\State $\mathsf{expose}(\mathsf{mininum}(T(I)))$
	\end{algorithmic}
\end{algorithm}
\smallskip

\noindent {\em The operation $\mathsf{removeMonCF}$:} To delete an interval $i$, we delete the incoming and outgoing solid edges of $i$ if such edges exist. We then delete $i$ from the tree $T(I)$. We restore the longest path of the $\mathsf{CF}$ by exposing the least interval in $T(I)$. 

\begin{algorithm}[H]
	\caption{$\mathsf{removeMonCF}(i)$}\label{alg:CF-delete}
	\begin{algorithmic}[1]
       	\State $\mathsf{remove}(\ST_i, i)$ \Comment Delete $i$ from its splay tree $\ST_i$
        	\State $\mathsf{remove}(T(I), i)$ \Comment Delete $i$ from the interval tree $T(I)$
		\State $\mathsf{expose}(\mathsf{minimum}(T(I)))$
	\end{algorithmic}
\end{algorithm}

\noindent {\em Correctness.} For correctness of operations, we use the following invariants.
\begin{itemize}
\item[(A1)] Every splay tree represents a maximal path formed from solid edges.
\item[(A2)] Let $m$ be the least interval in $I$. The splay tree $\ST_m$ contains all intervals on the path from $m$ to the root.
\end{itemize}
Note that (A2) guarantees that the query operation correctly determines if a given interval $i$ is in the greedy optimal set. \ The next lemma shows that (A1) and (A2) are invariants indeed and that the operations correctly solve the dynamic monotonic interval scheduling problem.

\begin{lemma}\label{lem:CF-correct}
(A1) and (A2) are invariants of $\mathsf{insertMonCF}$, $\mathsf{removeMonCF}$, and $\mathsf{queryMonCF}$.

\end{lemma}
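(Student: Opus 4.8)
The plan is to argue by induction on the number of operations performed: assuming (A1) and (A2) hold beforehand, I show they hold afterwards; the base case, the empty structure, is vacuous. The operation $\mathsf{queryMonCF}$ alters neither $T(I)$ nor any splay tree, so it preserves both invariants trivially; and by (A2) the tree $\ST_m$ for $m=\mathsf{minimum}(T(I))$ consists exactly of the intervals on the path from $m$ to its root in $\mathcal{F}(I)$, which is the greedy optimal set, so $\mathsf{find}(\ST_m,i)=i$ holds iff $i$ lies in that set --- this is the correctness of the query. For $\mathsf{insertMonCF}$ and $\mathsf{removeMonCF}$ the heart of the argument is to pin down exactly how $\mathcal{F}(I)$, equivalently the map $\mathsf{rc}$, changes, and then to verify that every splay tree manipulation keeps the solid edges a subset of the forest edges, partitioned into maximal solid paths, one per splay tree.

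For insertion of $i$, write $r=\mathsf{next}(i)$ in the new set. I would first establish, using monotonicity, the observation used for Lemma~\ref{lem:rc} that any interval lying between two overlapping intervals overlaps both of them, and the fact that in a monotonic set $i\prec j$ implies $s(i)<s(j)$, the structural claim: an interval $j$ has its right compatible interval changed by the insertion precisely when $j\prec i$ and $j$ is compatible with $i$; in that case $\mathsf{rc}(j)=r$ before the insertion (or $r$ does not exist and $j$ is a root of $\mathcal{F}(I)$) and $\mathsf{rc}(j)=i$ afterwards, while $i$ itself acquires the parent $\mathsf{rc}(i)$. Since any outgoing solid edge of such a $j$ must point to $r$, and $r$ has at most one incoming solid edge, at most one of these $j$ carries a solid edge, namely $j=\mathsf{predecessor}(\ST_r,r)$. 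Hence $\mathsf{split}(\ST_r,r)$ --- which Alg.~\ref{alg:CF-insert} performs exactly when this $j$ exists and is compatible with $i$ --- destroys the at most one solid edge that has become invalid and leaves $L(r)$ and $R(r)$ as maximal solid paths; every other reparented edge was dashed and needs no change, and the fresh node $i$ forms its own singleton solid path. So at this point every splay tree is again a maximal solid path of the updated $\mathcal{F}(I)$.

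For removal of $i$, the corresponding structural claim is that deleting $i$ changes $\mathsf{rc}$ only for the children of $i$ in $\mathcal{F}(I)$, each of which acquires $\mathsf{next}(i)$ (the successor of $i$ before the deletion) as its new right compatible interval, and that every such edge was dashed except for the single solid edge entering $i$ on its solid path. Consequently $\mathsf{remove}(\ST_i,i)$ --- which splays $i$ to the root of $\ST_i$ and deletes it, splitting $\ST_i$ into the (possibly empty) solid path ending at the predecessor of $i$ in $\ST_i$ and the solid path starting at $\mathsf{rc}(i)$ --- destroys exactly the two solid edges incident to $i$, leaves maximal solid paths, and produces no stale solid edge, since the only solid edge that could leave a child of $i$ is its edge into $i$. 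So after $\mathsf{remove}(T(I),i)$ the condition that every splay tree is a maximal solid path again holds. In both update operations the concluding step is $\mathsf{expose}(\mathsf{minimum}(T(I)))$; invoking the standard behaviour of the dynamic tree $\mathsf{expose}$ operation --- which never alters $\mathcal{F}(I)$ but only relabels edges solid or dashed through a sequence of $\mathsf{splay}$, $\mathsf{split}$, and $\mathsf{join}$ steps, each preserving ``every splay tree is a maximal solid path'' --- it preserves (A1); and since afterwards the whole path from the exposed node $m$ to its root is solid, that path lies in a single splay tree, which is $\ST_m$, so (A2) holds as well. This closes the induction.

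I expect the main obstacle to be proving the two structural claims about $\mathsf{rc}$: that on insertion every newly reparented interval previously pointed to the single interval $r$, so that a single $\mathsf{split}$ repairs the solid structure, and that on removal the children of $i$ are correctly and silently re-attached to $\mathsf{next}(i)$ with no explicit bookkeeping in Alg.~\ref{alg:CF-delete}. Both rest on monotonicity and on the ``interval between overlapping intervals'' observation above. The boundary cases --- $i$ inserted as the new greatest interval or removed as a root of $\mathcal{F}(I)$, and empty or one-element interval sets --- also have to be checked, but are routine.
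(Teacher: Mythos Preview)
Your approach matches the paper's: verify that (A1) survives the local splay-tree edits made by insert and remove, then invoke $\mathsf{expose}(\mathsf{minimum}(T(I)))$ to re-establish (A2). Your treatment is in fact more careful than the paper's rather terse argument, which glosses over exactly which solid edges are invalidated. One small correction: the claim that $\mathsf{rc}(j)$ changes \emph{precisely} when $j\prec i$ and $j$ is compatible with $i$ is too strong in the $\Leftarrow$ direction --- with $j=(0,1)$, $k=(2,3)$, $i=(4,5)$ in a monotonic set, $j$ is compatible with $i$ but $\mathsf{rc}(j)=k$ is unchanged --- yet your argument only uses the $\Rightarrow$ direction together with the conclusion that the old $\mathsf{rc}(j)$ equals $r$, and both of those hold, so the proof goes through as written.
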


\begin{proof}
For (A1), first consider the operation of joining  two splay trees $A$ and $B$ via the operation $\mathsf{expose}(i)$.
Let $j$ be the maximal element in $A$ and $j'$ be the minimum element in $B$. In this case, $j'$ is obtained by the operation $\mathsf{right\_compatible}(j)$.
It is clear that $(j,j')$ is an edge in the forest $\mathcal{F}(I)$. \ Next, consider the case when we apply $\mathsf{insertMonCF}(i)$  into the splay tree $A$. In this case,  $A$ is $L(r)$ where $r$ is the next interval of $i$ in $I$. Let $j$ be the previous interval of $r$ in the tree $\ST_r$. By (A1), before inserting $i$, $(j,r)$ is an edge in $\mathcal{F}(I)$ and thus $r=\mathsf{rc}(j)$. Note we only insert $i$ to $L(r)$ when $j$ is compatible with $i$. Since $i<r$, after inserting $i$, $i$ becomes the new right compatible interval of $j$. So, joining $L(r)$ with $i$ preserves (A1).
Operations $\mathsf{removeMonCF}(i)$ and  $\mathsf{queryMonCF}(i)$ do not create new edges in splay trees.  Thus,
(A1) is preserved under all operations.

For (A2), the $\mathsf{expose}(i)$ operation terminates when it reaches a root of the compatibility forest. As a result, $\ST_i$ contains all nodes on the path from $i$ to the root.
Since $\mathsf{expose}(\mathsf{minimum}(T(I)))$ is called at the end of both $\mathsf{insertMonCF}(i)$ and $\mathsf{removeMonCF}(i)$ operations, (A2) is preserved under every operation.
\qed
\end{proof}

\noindent {\em Complexity.}  Let $n$ be the number of intervals in $I$. As discussed in Section~\ref{sec:pre}, all operations for the interval tree have $O(\log n)$ worst case complexity, and all operations for splay trees have $O(\log n)$ amortised complexity. The query operation, involves finding the minimum interval in $T(I)$ and searching $i$ in a  splay tree. Hence, the query  operation runs in amortised time $O(\log n)$.
For each insert and remove operation, we perform a constant number of operations on $T(I)$ and the splay trees plus one $\mathsf{expose}$ operation.

To analyse $\mathsf{expose}$ operation, define the size $\mathsf{size}(i)$ of an interval $i$ to be the number of nodes in the subtree rooted at $i$ in $\mathcal{F}(I)$. Call an edge $(i,j)$ in $\mathcal{F}(I)$  {\em heavy} if $2\cdot \mathsf{size}(i)>\mathsf{size}(j)$, and {\em light} otherwise. It is not hard to see that this partition of edges has the following properties:
\begin{itemize}
\item[($\star$)] Every node has at most one incoming heavy edge.
\item[($\star\star$)] Every path in the compatibility forest consists of at most $\log n$ light edges.
\end{itemize}

\begin{lemma}\label{lem:CF-complex}
In a sequence of $k$ update operations, the total number of dashed edges, traversed by $\mathsf{expose}$ operation,  is $O(k\log n)$.
\end{lemma}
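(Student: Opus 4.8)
The plan is to bound the number of dashed edges traversed during a single $\mathsf{expose}$ by classifying them, as in the standard Sleator--Tarjan analysis~\cite{tarjan83}, into three kinds according to the heavy/light partition of $\mathcal{F}(I)$: dashed-heavy edges, light edges, and the remaining ``structural'' dashed edges. First I would recall that $\mathsf{expose}$ walks from the exposed node $v$ up to the root, and each dashed edge it meets becomes solid while the previous incoming solid edge at the parent (if any) becomes dashed. The light edges on the $v$-to-root path contribute at most $\log n$ per $\mathsf{expose}$ by property $(\star\star)$, so over $k$ update operations these account for $O(k\log n)$ traversed dashed edges. The remaining dashed edges traversed are heavy; here the idea is an amortised/accounting argument: since by $(\star)$ every node has at most one incoming heavy edge, whenever $\mathsf{expose}$ converts a dashed heavy edge to solid it can ``pay'' for it, and the only way a heavy edge becomes dashed again is if some earlier operation made another edge solid at that node — but that can be charged against a light edge of a later $\mathsf{expose}$ or against the update operation that caused it.

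More concretely, I would set up a potential function $\Phi$ equal to the number of dashed heavy edges in $\mathcal{F}(I)$ (this is the standard Sleator--Tarjan potential for the ``heavy path decomposition'' view of dynamic trees). Each $\mathsf{expose}$ traverses some dashed edges; split them into $\ell$ light ones and $h$ heavy ones. The $\ell$ term is $O(\log n)$ by $(\star\star)$. Each of the $h$ heavy dashed edges is made solid, decreasing $\Phi$; but making an edge solid at a node $p(x)$ may turn that node's former incoming solid edge into a dashed edge, and that newly-dashed edge is necessarily light (because $p(x)$ already has its unique incoming heavy edge now solid), so $\Phi$ does not increase from this. Hence a single $\mathsf{expose}$ has amortised cost $O(\log n)$: actual traversed dashed edges $\le O(\log n) + h$, and the $h$ heavy conversions are absorbed by the drop in $\Phi$. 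Finally, each update operation ($\mathsf{insertMonCF}$ or $\mathsf{removeMonCF}$) only performs a constant number of $\mathsf{link}$/$\mathsf{cut}$-style edits on $\mathcal{F}(I)$ outside of the $\mathsf{expose}$ call — inserting/deleting one node and rewiring $O(1)$ edges — so it can raise $\Phi$ by only $O(1)$. Summing over $k$ updates: total actual dashed edges traversed $\le \sum (O(\log n) + h_i) = O(k\log n) + (\Phi_{\mathrm{initial}} - \Phi_{\mathrm{final}} + O(k)) = O(k\log n)$, since $0 \le \Phi \le n$ at all times.

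The main obstacle I anticipate is the bookkeeping in the amortised step: one must check carefully that converting a dashed heavy edge to solid during $\mathsf{expose}$ cannot create a new dashed \emph{heavy} edge. This is exactly property $(\star)$ — the parent has a unique incoming heavy edge — but one has to make sure the sizes $\mathsf{size}(\cdot)$ used to define heavy/light are taken with respect to the current forest $\mathcal{F}(I)$ (which is static during a single $\mathsf{expose}$, since $\mathsf{expose}$ changes only the solid/dashed labelling, not the edge set or the parent function $\mathsf{rc}$), so the heavy/light classification is fixed throughout the $\mathsf{expose}$ and $(\star)$, $(\star\star)$ apply unchanged. A secondary point to verify is that each call to $\mathsf{right\_compatible}$ inside the $\mathsf{expose}$ loop is $\Theta(\log n)$ by Lemma~\ref{lem:rc} and each $\mathsf{split}$/$\mathsf{join}$ is $O(\log n)$ amortised, so the \emph{time} bound follows from the \emph{count} bound with an extra $\log n$ factor — though the present lemma only asserts the count, so this remark is for the subsequent complexity theorem rather than this proof.
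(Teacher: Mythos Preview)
Your framework---the heavy/light decomposition with the potential $\Phi$ equal to the number of dashed heavy edges---is exactly the right one, and it is also the paper's approach (phrased as direct counting rather than via a potential). The analysis of a single $\mathsf{expose}$ is essentially correct, though you glossed over one detail: when a \emph{light} dashed edge is made solid, the displaced incoming solid edge at the parent may well be heavy, so $\Phi$ can increase by $1$ there. This is harmless, since there are at most $\log n$ light edges on the path, so $\Delta\Phi \le -h + \ell$ and the amortised traversal count is still $\le 2\ell = O(\log n)$.

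The real gap is your final step: ``each update operation \dots\ rewires $O(1)$ edges \dots\ so it can raise $\Phi$ by only $O(1)$.'' This is false for the compatibility forest. When you insert (or delete) an interval $i$, every child of the next interval $r$ that is compatible with $i$ has its parent pointer implicitly redirected to $i$ (respectively, from $i$ back to $r$); see Figure~\ref{fig:update} and the Remark after Theorem~\ref{thm:CF-amortized-time}. That is potentially $\Theta(n)$ edges of $\mathcal{F}(I)$ changing, and the subtree sizes---hence the heavy/light labels---change along the root paths through $i$ and $r$. The algorithm does not touch these edges explicitly (they live only in $T(I)$ via $\mathsf{right\_compatible}$), but your potential $\Phi$ is defined on $\mathcal{F}(I)$, so it feels every one of them.

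What saves the lemma is that these structural changes still create only $O(\log n)$ new heavy dashed edges per update, not $O(1)$: the redirections affect only the two root paths through $i$ and through $r$, and each such path contains at most $\log n$ light edges by $(\star\star)$, so at most $O(\log n)$ of them can be promoted to heavy. That is precisely the argument the paper supplies, and it is the substantive content of this proof. With $O(\log n)$ in place of your $O(1)$, your summation still yields $O(k\log n)$ total, so the approach is salvageable---but you must actually prove the $O(\log n)$ bound on $\Delta\Phi$ per structural update, and that step is not a triviality of the Sleator--Tarjan template.
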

\begin{proof}
The number of iterations in $\mathsf{expose}$ operation is the number of dashed edges in a path from the least interval to the root. A dashed edge is either heavy or light. From ($\star\star$), there are at most  $\log n$ light dashed edges in the path. To count the number of heavy dashed edges, consider the  previous update operations.  After deletion of
$i$, all children of $i$ become children of the next interval of~$i$. After inserting $i$, the children of the next interval of $i$ that are compatible with $i$ become children of~$i$.
Therefore , there are at most two path where an update operation transforms light dashed edges to heavy dashed edges. Figure~\ref{fig:update} illustrates these structural changes. Since there are at most $\log n$ light dashed edges on each path, an update operations creates at most $\log n$ heavy dashed edges.

Execution of $\mathsf{expose}$ in an update operation creates at most $\log n$ heavy dashed edges from heavy solid edges. Hence, the total number of heavy dashed edges created after $k$ update operations is $O(k\log n)$. \qed
\begin{figure}[!ht]
    \centering
	\includegraphics[clip=true, trim=0 0 0 0, width=1\textwidth]{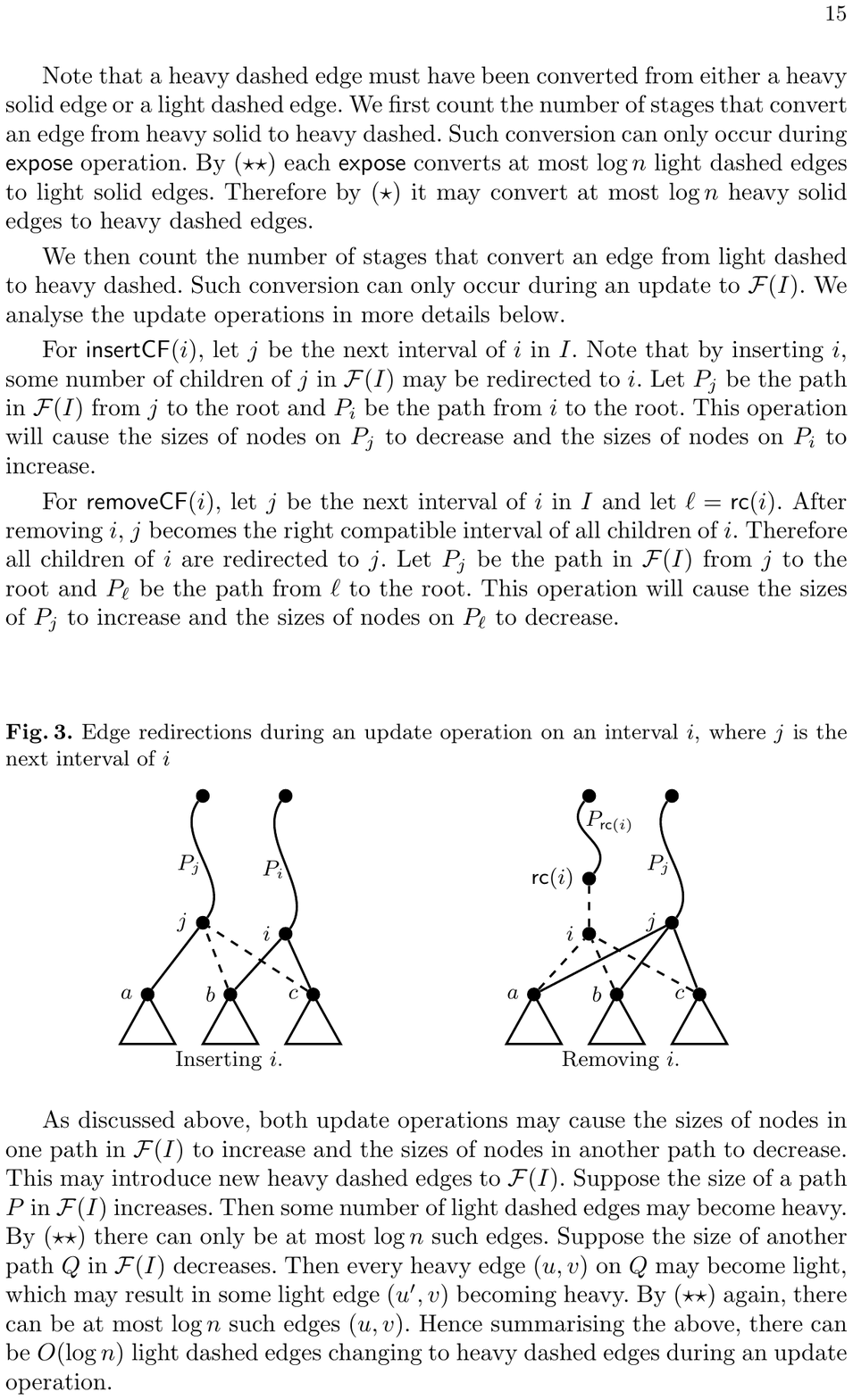}
	\caption{Redirections of edges in $\mathsf{CF}$, where $j$ is the next interval of $i$.}
	\label{fig:update}
\end{figure}
\end{proof}
Lemma~\ref{lem:CF-correct} and Lemma~\ref{lem:CF-complex} give us the following theorem:

\begin{theorem}\label{thm:CF-amortized-time}
The algorithms $\mathsf{queryMonCF}$, $\mathsf{insertMonCF}$ and $\mathsf{removeMonCF}$ solve the dynamic monotonic interval scheduling problem. The algorithms perform insert interval and remove interval operations in $O(\log^2 n)$ amortised time and query operation in $O(\log n)$ amortised time, where $n$ is the size of the set $I$ of intervals.
\end{theorem}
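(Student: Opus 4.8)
The plan is to combine the correctness statement of Lemma~\ref{lem:CF-correct} with the edge-counting estimate of Lemma~\ref{lem:CF-complex} and the known amortised bounds for the auxiliary data structures. The correctness half is immediate: Lemma~\ref{lem:CF-correct} shows (A1) and (A2) are invariants, and as noted there, (A2) guarantees $\mathsf{queryMonCF}$ returns {\sf true} exactly when $i$ lies on the path from the least interval to a root of $\mathcal{F}(I)$, which by the discussion in Section~\ref{sec:CF} is precisely the greedy optimal set. So it remains to establish the timing bounds.

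For the query operation, I would observe that $\mathsf{queryMonCF}$ performs one $\mathsf{minimum}$ call on $T(I)$ (worst case $O(\log n)$, since $T(I)$ is balanced) and one $\mathsf{find}$ on the splay tree $\ST_m$ (amortised $O(\log n)$ by \cite{tarjan85}), for a total of amortised $O(\log n)$. For the update operations, I would first bound the cost of a single $\mathsf{expose}$ call: each iteration of its {\bf while} loop performs one $\mathsf{right\_compatible}$ (worst case $\Theta(\log n)$ by Lemma~\ref{lem:rc}), one $\mathsf{split}$ and one $\mathsf{join}$ on splay trees (each amortised $O(\log n)$ by \cite{tarjan85}), and one $\mathsf{maximum}$; so the per-iteration cost is amortised $O(\log n)$, and the total cost of $\mathsf{expose}$ is $O(\log n)$ times the number of loop iterations, i.e.\ the number of dashed edges traversed on the path from the least interval to the root. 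Then I would invoke Lemma~\ref{lem:CF-complex}: over any sequence of $k$ update operations the total number of such traversed dashed edges is $O(k\log n)$, hence the total cost of all $\mathsf{expose}$ calls is $O(k\log^2 n)$, i.e.\ amortised $O(\log^2 n)$ per update. Each $\mathsf{insertMonCF}$ or $\mathsf{removeMonCF}$ additionally performs only a constant number of operations on $T(I)$ and the splay trees ($\mathsf{insert}$/$\mathsf{remove}$ on $T(I)$, $\mathsf{next}$, $\mathsf{predecessor}$, $\mathsf{split}$), each amortised $O(\log n)$, which is absorbed into the $O(\log^2 n)$ term. This gives the claimed amortised bounds.

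One subtlety to handle carefully is that we are mixing worst-case bounds (interval tree $T(I)$ operations, $\mathsf{right\_compatible}$) with amortised bounds (splay tree operations); since a worst-case bound is in particular an amortised bound, the usual additivity of amortised costs over a sequence of operations applies, and the potential functions for the two splay-tree families compose additively, so the overall amortised analysis is legitimate. The only real obstacle is making precise the accounting in the previous paragraph — specifically, that the number of splay-tree $\mathsf{split}$/$\mathsf{join}$ operations invoked across all $\mathsf{expose}$ calls is proportional to the number of dashed edges traversed, so that Lemma~\ref{lem:CF-complex} can be applied as a black box — but this is a direct reading of the loop in Algorithm~\ref{alg:expose}. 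Everything else is routine bookkeeping.
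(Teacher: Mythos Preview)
Your proposal is correct and follows essentially the same approach as the paper: correctness via the invariants of Lemma~\ref{lem:CF-correct}, the $O(\log n)$ query bound from one interval-tree operation plus one splay-tree search, and the $O(\log^2 n)$ update bound obtained by combining the $O(\log n)$ amortised cost per $\mathsf{expose}$ iteration with the $O(k\log n)$ total dashed-edge count of Lemma~\ref{lem:CF-complex}. Your treatment is in fact more explicit than the paper's, which simply states that the theorem follows from the two lemmas together with the preceding complexity discussion.
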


\noindent {\bf Remark}. {Tarjan and Sleator's dynamic tree data structure has amortised time $O(\log n)$ for update and query operations. To achieve this, the algorithm maintains dashed edges explicitly. Their technique cannot be adapted directly to $\mathsf{CF}$ because insertion or removal of  intervals may result in redirections of a linear number of edges. An example is depicted on the Figure~\ref{fig:update}.
Therefore, more care should be taken;  \ for instance, one needs to maintain dashed edges implicitly in  $T(I)$ and compute them calling $\mathsf{right\_compatible}$ operation.}


\smallskip

\begin{proposition}[{Sharpness of the $\log^2n$ bound}]
In $\mathsf{CF}$ data structure there exists a sequence of $k$ update operations with $\Theta(k\log^2 n)$ total running time.
\end{proposition}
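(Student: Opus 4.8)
We only sketch the plan; the technical content is a somewhat delicate scheduling argument on top of the construction.

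The starting point is that every iteration of $\mathsf{expose}$ is costly on its own. The loop of Algorithm~\ref{alg:rc} walks from the root of the balanced search tree $T(I)$ down to a leaf, performing exactly one child step per iteration, so $\mathsf{right\_compatible}$ takes $\Theta(\log n)$ time on \emph{every} call, not merely in the worst case; and $\mathsf{expose}$ invokes it once per iteration (plus once before the loop). Hence an $\mathsf{expose}$ performing $t$ iterations costs $\Omega(t\log n)$, and it suffices to exhibit, for infinitely many $n$, a sequence of $k$ update operations on a monotonic set of $\Theta(n)$ intervals in which a constant fraction of the operations finish with an $\mathsf{expose}$ that performs $\Theta(\log n)$ iterations. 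As noted in the proof of Lemma~\ref{lem:CF-complex}, the number of iterations of $\mathsf{expose}(\mathsf{minimum}(T(I)))$ is the number of dashed edges on the path $P$ from the least interval to the root at the moment it is called, so what we need is: $\Theta(k)$ of the operations leave $\Theta(\log n)$ dashed edges on $P$. Such a sequence has total cost $\Omega(k\log^2 n)$, and Theorem~\ref{thm:CF-amortized-time} supplies the matching $O(k\log^2 n)$, giving $\Theta(k\log^2 n)$.

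To produce such a schedule, the plan is to work with a compatibility forest that is ``balanced'' along $P$: choose the interval endpoints so that the subtree sizes of the vertices on $P$ roughly double at each step — concretely, off the $t$-th vertex of $P$ hang a balanced side subtree of size comparable to the portion of $P$ already below it — so that all $\Theta(\log n)$ edges of $P$ are \emph{light} in the sense of property $(\star\star)$, the remaining $\Theta(n)$ intervals forming the side subtrees; the remark following Theorem~\ref{thm:CF-amortized-time} is in effect the observation that updates can trigger exactly the large-scale edge redirections one wants here. On top of this static picture one runs update operations each of which, via the subtree migrations analysed in the proof of Lemma~\ref{lem:CF-complex} (on deletion the children of an interval $i$ pass to the $\preceq$-successor of $i$; the reverse on insertion), simultaneously rebalances the subtree sizes along $P$ so that its $\Theta(\log n)$ light edges turn from solid to dashed; the $\mathsf{expose}$ closing the operation must then re-traverse and re-solidify all of them, at cost $\Theta(\log n)\cdot\Theta(\log n)=\Theta(\log^2 n)$. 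Iterating $k$ such operations gives the claimed total.

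The step I expect to be the main obstacle is making this behaviour \emph{recur} for $\Theta(k)$ operations rather than once. Since $\mathsf{expose}(\mathsf{minimum}(T(I)))$ re-solidifies $P$ at the end of every update, and since a single change to the interval set alters $\mathsf{rc}$ only for the children (in the compatibility forest) of the affected interval — at most one of which can lie on the path $P$ — the re-dashing of $\Theta(\log n)$ edges of $P$ cannot be caused directly by the structural change; it has to be supplied by the way $\mathsf{expose}$ redistributes dashed edges (it makes the new $P$ solid but, for each maximal solid path that $P$ splices, ``kicks out'' a competing solid edge and declares it dashed). A full proof therefore has to track these kicked-out edges and choose the interval coordinates so that, after each update, $\Theta(\log n)$ of them are positioned precisely along the path $P$ of the \emph{next} configuration — a self-sustaining pattern in which each expensive update consumes $\Theta(\log n)$ dashed edges and produces $\Theta(\log n)$ new ones, which is exactly the phenomenon that makes the iteration count in Lemma~\ref{lem:CF-complex} tight. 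Pinning down an explicit monotonic interval set and update schedule realising this is where essentially all of the work lies.
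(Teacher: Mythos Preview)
Your proposal correctly isolates the mechanism --- each iteration of $\mathsf{expose}$ costs $\Theta(\log n)$ via $\mathsf{right\_compatible}$, so one needs $\Theta(k)$ updates whose closing $\mathsf{expose}$ traverses $\Theta(\log n)$ dashed edges --- but then stops precisely where the content should be. You say so yourself: ``Pinning down an explicit monotonic interval set and update schedule realising this is where essentially all of the work lies.'' As written this is a plan, not a proof.

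The obstacle you flag in the last paragraph is genuine for your setup, but it is an artifact of that setup. You fix the path $P$ from the current minimum and then try to arrange that $\Theta(\log n)$ of its edges become dashed again between consecutive updates; since a single insertion or deletion redirects $\mathsf{rc}$ for at most one vertex of $P$, you are forced into delicate bookkeeping of the edges ``kicked out'' by the previous $\mathsf{expose}$. The paper sidesteps all of this by \emph{moving the minimum rather than re-dashing a fixed path}. Take the compatibility forest to be a perfect binary tree of height $h$ on $n=2^{h+1}-1$ intervals. Because every node has at most one incoming solid edge while every internal node has two children, one can descend from the root choosing a dashed edge at each step; hence in \emph{any} solid/dashed state there is a root-to-leaf path of length $h$ that is entirely dashed. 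Now insert a fresh interval $i_s$ as a new leaf beneath the bottom of that path, with endpoints chosen so that $i_s$ becomes the $\preceq$-minimum: the path from $i_s$ to the root has length $h+1$ and is all dashed, so the $\mathsf{expose}$ closing $\mathsf{insertMonCF}(i_s)$ performs $\Theta(\log n)$ iterations at cost $\Theta(\log^2 n)$. Deleting $i_s$ restores a perfect binary tree, and one repeats. Over $k$ operations this gives $\Theta(k\log^2 n)$, with the upper bound supplied by Theorem~\ref{thm:CF-amortized-time}.

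So the idea you are missing is not an intricate self-sustaining schedule of light edges along a fixed spine, but the one-line pigeonhole observation that a perfect binary tree always carries an all-dashed root-to-leaf path, together with the freedom to drop each newly inserted interval at its foot.
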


\begin{proof}
Consider a sequence which creates a set of $n < k$ intervals. We assume that $n = 2^{h+1}-1$, where $h \in \N$. The first $n$ operations of the sequence are $\mathsf{insertMonCF}$  such that the resulted compatibility forest is a perfect binary tree $T_n$, that is, each internal node of $T_n$ has exactly two children and the height of each leaf  in $T_n$ is $h$.  The next $k-n$ operations starting form $T_n$ are pairs of $\mathsf{insertMonCF}$  followed by $\mathsf{removeMonCF}$.
At stage $s=n+2m+1$, $\mathsf{insertMonCF}$ inserts an interval $i_s$ into $T_{s}$ producing the tree $T_{s+1}$. The interval $i_s$ is such that in  $T_{s+1}$ the path from $i_s$ to the root is of length $h+1$ and the path consists of dashed edges only. Then, at stage $s+1$ we delete $i_s$. This produces a tree $T_{s+2}$ which is a perfect binary tree of height $h$. We repeat this $k-n$ times. We can select $i_s$ as desired since
each  perfect binary tree $T_s$ always has a path of length $h$ consisting of dashed edges only. Therefore a sequence of $k$ such operations takes time~$\Theta(k\log^2 n)$.\qed
\end{proof}

\subsection{Compatibility forest of a non-monotonic set of intervals}

In this section, we show how to maintain compatibility forest for a set of non-monotonic intervals.  In this case, extra care should be taken when we insert interval $i$ that is covered by other intervals since $i$ may become the new right compatible interval for several overlapping intervals.  The example in Figure~\ref{fig:bad-insert} shows such insertion. Therefore, when we insert an interval $i$, we need to  find all intervals covering $i$.

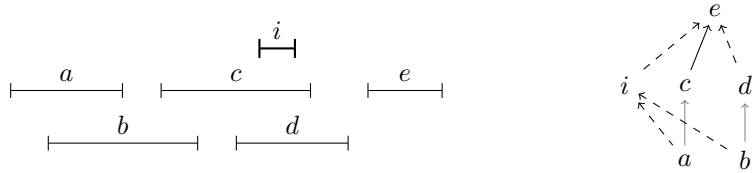
\begin{figure}[h!]
\centering
\begin{tikzpicture}[xscale=0.5, yscale=0.7]
	
	\I{a}{0,1}{3};
	\I{b}{1,0}{4};

	\I{c}{4,1}{4};
	\I{d}{6,0}{3};

	\I{e}{9.5,1}{2};

	\begin{scope}[thick]
		\I{i}{6.6,1.8}{1};
	\end{scope}

	\node at (0,-0.5) {};	
\end{tikzpicture}
\hspace{20mm}
\begin{tikzpicture}[<-, xscale=0.4, yscale=0.5, i'/.style={color=blue}]
  \foreach \name/\x/\y in {a/0/0, b/2/0,
  	i/-2/2, c/0/2, d/2/2, e/1/4 }
	\node (\name) at (\x, \y) {$\name$};
	
  \path (e) edge[dashed] (i) edge[dashed] (d) edge (c);
  \path (c) edge[gray, very thin] (a);	
  \path (d) edge[gray, very thin] (b);	
  \path (i) edge[dashed] (b) edge[dashed] (a);	

\end{tikzpicture}
\caption{Insertion of $i$ into $\mathsf{CF}$ destroys solid edges $(a,c)$ and $(b,d)$}\label{fig:bad-insert}
\end{figure}

We describe a new operation $\mathsf{covers}(i)$, which returns all intervals covering a given interval. To support this operation, we introduce the third component - an interval tree data structure. An {\em interval tree} \cite{i_tree}  is a leaf-oriented balanced binary search tree where leaves store endpoints of the intervals in increasing order. Intervals themselves are stored in the internal nodes as follows. For each internal node $v$ the set $I(v)$ consists of intervals that contain the {\em split point} of $v$ and are covered by the {\em range} of $v$. The split point of $v$, denoted by $split(v)$, is a number such that the leaves of the left subtree of $v$ store endpoints smaller than $split(v)$, and the leaves of the right subtree of $v$ store endpoints greater than $split(v)$. The range of $v$, denoted by $range(v)$, is defined recursively as follows. The range of the root is $(-\infty, \infty]$. For a node $v$, where $range(v) = (l, r]$, the range of the left child of $v$ is $(l,split(v)]$, and the range of the right child of $v$ is $(split(v),r]$.

To allow insertions and deletions of intervals in the interval tree, we represent it as a red-black tree $IT$. In a red-black tree, insertion or deletion of a node takes $O(\log n)$ time plus the time for at most 3 rotations to restore the balance. When performing a rotation around an edge $(v, p(v))$ the sets $I(v)$ and $I(p(v))$ change. Let the range of $p(v)$ be $(\ell, r]$. If $v$ is the left child, the range of $p(v)$ after rotation becomes $[split(v), r]$. If $v$ is the right child, the range of $p$ shortens at the other end and becomes $[\ell, split(v)]$. Therefore all intervals in $I(p(v))$ that intersects with $split(v)$ must be moved to $I(v)$. Note that ranges of other nodes are not affected. We represent $I(v)$ with two binary search trees. The first tree $T_s(v)$ stores intervals of $I(v)$ sorted from left to right by their starting point. The second tree $T_f(v)$ stores intervals of $I(v)$ sorted from right to left by their finishing point. To move intervals from $I(v)$ to $I(u)$, we perform join and split operations on the trees representing these sets. Thus in total we need $O(\log n)$ time to insert or delete a node in $IT$.

To find all intervals covering interval $i$ we do the following. We walk down in the interval tree starting at the root. At every node $v$, we compare $split(v)$ with $s(i)$ and $f(i)$. If $split(v) > f(i)$ or $split(v) < s(i)$, we respectively traverse the tree $T_s(v)$ or $T_f(v)$ from left to right and report all intervals that covers $i$. We continue to the left or right child of $v$ respectively. Otherwise, the split point $split(v)$ intersect with $i$. We traverse $T_s(v)$, report intervals covering $i$. We terminate the search at this node, because $i$ intersects with ranges of both children of $v$, namely $[\ell, split(v))$ and $[split(v), r)$, and therefore cannot be covered by any interval, fully contained in these ranges. Formally, the operation is described in Algorithm~\ref{alg:covered}.

\begin{algorithm}[H]
	\caption{$\mathsf{covers}(i)$}\label{alg:covered}
	\begin{algorithmic}[1]
	\State $S \leftarrow \mathsf{nil}$
        \State $v \leftarrow$ root of $IT$
        \While{$v \ne \mathsf{nil}$ }
        		\If{$split(v) < s(i)$}
			\State $x \leftarrow \mathsf{minimum}(T_f(v))$
			\State $v \leftarrow$ right child of $v$	
		\Else
			\State $x \leftarrow \mathsf{minimum}(T_s(v))$
			\State $v \leftarrow$ left child of $v$	
		\EndIf
		\While{$x$ covers $i$}
			\State Add $i$ into $S$
			\State $x \leftarrow \mathsf{successor}(x)$
		\EndWhile
		\If{$split(v) \in i$ }
				\State {\bf break}
		\EndIf
	\EndWhile
	\State {\bf return} $S$
    \end{algorithmic}
\end{algorithm}

Another thing we need to take care of is the $\mathsf{right\_compatible}$ operation. Since the set of intervals is not monotonic, the observation, essential for the proof of Lemma~\ref{lem:rc}, does not hold. Namely, there might exists intervals $i \preceq  j$ and $k$ covered by $j$ such that $i$ intersects with $j$, but $i$ is compatible with $k$. To overcome this difficulty, we augment the search tree $T(I)$ and the operation $\mathsf{right\_compatible}$ as follows. In every node $v$ of $T(I)$ we keep a pointer to the interval $m$ in the subtree rooted at $v$ with the smallest finishing time. It is not hard to check that these pointers can be updated after a rotation in constant time. Therefore, maintenance of these pointers does not change the asymptotic complexity of operations on $T(I)$.

Recall that when we search for the right compatible interval of $i$ in the monotonic set, we go to the left child of the current interval $j$ if $i \preceq j$ and $i$ is compatible with $j$. In a non-monotonic set, we need to check  if there is an interval $k$ that is covered by $j$ and hence compatible with $i$. If such $k$ exists, it is in the right subtree of $j$ and we can access it in constant time using the pointers we described above. Therefore, if we go to the left child of $x$, we remember an interval with the smallest finishing time among three intervals: the last remembered interval, an interval at $x$ or an interval with the smallest finishing time in the right subtree of~$x$. For example, if we search for $\mathsf{rc}(a)$ in the tree shown in Figure~\ref{fig:rc_gen_example}, we traverse the path $\{g, d, b, c\}$ and remember intervals $h, d, d$ one after another.

\begin{figure}[!ht]
\centering
\begin{tikzpicture}[|-|, xscale=0.40, yscale=0.6, every node/.style={above left}]
		\draw (2, 4) -- (4.5, 4) node {$b$} ;
		\draw (5, 4) -- (8.2, 4) node {$d$} ;
		\draw (8.5, 4) -- (17, 4) node {$g$} ;
		
		\draw (1, 3)   -- (3, 3) node {$a$};
		\draw (4, 3)   -- (10.5, 3) node {$c$};
		\draw (11.5, 3)   -- (16, 3) node {$i$};
		
		\draw (6, 1) -- (9.5, 1) node {$e$};
		\draw (11, 2) -- (14, 2) node {$h$};
		\draw (14.5, 2) -- (18, 2) node {$k$};
		
		\draw (6.5, 2) -- (10, 2) node {$f$};
		\draw (13.5, 1) -- (15, 1) node {$j$};
		
		\node at  (2,0) {};
\end{tikzpicture}
\hspace{2mm}
\begin{tikzpicture}[-, scale=0.85,
	level distance=12mm, sibling distance = 32mm,
	level 2/.style={level distance=8mm, sibling distance=15mm},
	level 3/.style={level distance=8mm, sibling distance=7mm} ]	
  \node (g) at (0,0) [yshift=10cm] {$g$}
    child { node {$d$}
      child { node {$b$}
      	child {node {$a$} }
	child {node {$c$} }
      }
      child { node {$e$}
        child [missing]
        child { node {$f$} } }
      }
    child { node {$j$}
      child { node {$h$}
        child [missing]
        child { node {$i$} } }
      child { node {$k$} }
      };		
\end{tikzpicture}
\caption{Figure}\label{fig:rc_gen_example}
\end{figure}

\medskip

We are now ready to describe the operations, that maintain a compatibility forest for a non-monotonic set of intervals. The query operation $\mathsf{queryCF}(i)$ and the remove operation $\mathsf{removeCF}(i)$ are identical to $\mathsf{queryMonCF}(i)$ and \linebreak $\mathsf{removeMonCF}(i)$ respectively. The insert operation $\mathsf{insertCF}(i)$ does the following. First, we add $i$ into the trees $T(I)$ and $IT$. Second, as in the monotonic case, we check is there exists a solid edge $(j, r)$ such that $i$ substitutes $r$. Namely, we search for $r$ such that: (i) $i \prec r$, (ii) $i$ is not covered by $r$, (iii) for every $i \prec \ell \prec r$, $\ell$ covers $i$. Then, if there exist a solid edge from $j$ to $r$ and $j$ is compatible to $i$, we make this edge dashed. Third, for every interval, which covers $i$, we make the incoming solid edge, if any, dashed. Finally, we restore the longest path of the compatibility forest by exposing the $\preceq$-least interval. Algorithm~\ref{alg:CF-insert-gen} describes the operation in details.

\begin{algorithm}[H]
\caption{$\mathsf{insertCF}(i)$}\label{alg:CF-insert-gen}
\begin{algorithmic}[1]
        	\State $\mathsf{insert}(T(I), i)$  			
	\State $\mathsf{insert}(IT, i)$

	\State $r \leftarrow\, \preceq$-next interval that does not cover $i$
	\If {$r \neq \mathsf{nil}$}
		\State $j \leftarrow \mathsf{predecessor}(\ST_r, r)$ \Comment Find a solid edge $(j,r)$
		\If {$j \neq \mathsf{nil}$ and $j$ is compatible with $i$}       	  		
			\State $\mathsf{split(\ST_{r}, r)}$
		\EndIf
	\EndIf

	\For {$c$ in  $\mathsf{covers}(i)$}
       		\State $j \leftarrow \mathsf{predecessor}(\ST_c, c)$ 
		\If {$j \neq \mathsf{nil}$ and $j$ is compatible with $i$}       	  		
			\State $\mathsf{split(\ST_{c}, c)}$
		\EndIf
	\EndFor
	\State $m \leftarrow$ $\preceq$-least interval
	\State $\mathsf{expose}(m)$
\end{algorithmic}
\end{algorithm}

\begin{lemma}
In a sequence of $k$ update operations, the total number of dashed edges, traversed by $\mathsf{expose}$ operation,  is $O(d\cdot k\log n)$, where $d$ is the size of a maximal subset of pairwise overlapping intervals.
\end{lemma}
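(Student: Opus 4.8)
The plan is to adapt the amortised analysis of Lemma~\ref{lem:CF-complex} to the non-monotonic case, tracking how many light dashed edges an update operation can convert into heavy dashed edges. As before, each iteration of $\mathsf{expose}$ from the least interval to the root traverses one dashed edge, and each such edge is either light or heavy. By property $(\star\star)$ at most $\log n$ of the dashed edges on this path are light, so the real work is bounding the heavy dashed edges created by the $k$ update operations. Recall that a heavy dashed edge only becomes so when a previously-heavy solid edge is declared dashed (by $\mathsf{expose}$ of the subsequent operation, which costs $\le \log n$ per operation, as in Lemma~\ref{lem:CF-complex}) or when the update itself redirects a light edge into a heavy one by changing the $\mathsf{rc}$ relation. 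So I would focus on counting how many nodes have their parent (i.e.\ their right compatible interval) changed by a single $\mathsf{insertCF}$ or $\mathsf{removeCF}$.

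The key step is the observation that in the non-monotonic case an insertion of $i$ can change $\mathsf{rc}$ for up to $d$ many intervals at once — precisely the intervals $\ell$ with $\ell \prec i$ that were made to point past $i$ but now point to $i$ (these are exactly the intervals covering $i$ together with the $j$ found via the $r$ step), and these are pairwise overlapping since they all contain $s(i)$ (or overlap $i$), hence there are at most $d$ of them; symmetrically, a removal of $i$ reroutes the children of $i$ to the next interval of $i$, and the children of $i$ in $\mathcal F(I)$ that were pairwise distinct under $\mathsf{rc}$ still number at most $d$ since they all overlap pairwise (any two children $x,y$ of $i$ satisfy $f(x),f(y) < s(i)$ and... — more carefully, the relevant bound is that the set of edges an update redirects lies along at most $d$ root-to-node paths, each carrying at most $\log n$ light edges by $(\star\star)$). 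Thus each update converts at most $O(d\log n)$ light dashed edges into heavy dashed edges, and $\mathsf{expose}$ within that update creates at most $\log n$ further heavy dashed edges from heavy solid edges. Summing over $k$ updates gives $O(d\cdot k\log n)$ heavy dashed edges total, and adding the $O(k\log n)$ light dashed edges traversed yields the claimed $O(d\cdot k\log n)$ bound.

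The main obstacle will be making precise the claim that the edges redirected by a single update all lie on at most $O(d)$ root-to-leaf (or root-to-node) paths in $\mathcal F(I)$, so that property $(\star\star)$ can be invoked once per path. For insertion this requires arguing that the intervals whose $\mathsf{rc}$ changes to $i$ — those returned by $\mathsf{covers}(i)$ plus the single interval $j$ from the $r$-branch — are pairwise overlapping (they all contain the point $s(i)$ or straddle it), so there are at most $d$ of them; and that changing $\mathsf{rc}(\ell)$ for these $\ell$ affects only the $\le \log n$ light dashed edges on the old path from $\ell$ upward, which is where $(\star\star)$ enters. For removal one must similarly bound the number of distinct former-children of $i$ whose parent changes. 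A clean way to handle both at once is to note, as in Figure~\ref{fig:update} and Figure~\ref{fig:bad-insert}, that the structural change is localised to the $O(d)$ nodes incident to the redirected edges and the paths hanging below them; I would state this as an auxiliary claim (``each update operation redirects the light edges of at most $2d$ root-to-node paths of $\mathcal F(I)$'') and then the counting is identical to Lemma~\ref{lem:CF-complex}.
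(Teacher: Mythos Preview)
Your plan is essentially the paper's proof: adapt Lemma~\ref{lem:CF-complex} by arguing that a single update disturbs the heavy/light structure along at most $O(d)$ root-to-node paths, so that $(\star\star)$ yields $O(d\log n)$ new heavy dashed edges per update and the total over $k$ updates is $O(dk\log n)$.

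One mislabeling to clean up: the intervals $\ell$ whose $\mathsf{rc}$ switches to $i$ are \emph{not} the intervals covering $i$ --- the former end before $s(i)$ and are compatible with $i$, while the latter contain $i$. Both sets happen to be pairwise overlapping (hence of size $\le d$), but the paper runs the argument through the covering intervals $j_1,\dots,j_m$: on insertion the children of each $j_t$ that are compatible with $i$ migrate to $i$, and on removal the children of $i$ are redistributed among the $j_t$; the $O(d)$ affected paths are those through the $j_t$, and your auxiliary claim about ``at most $2d$ root-to-node paths'' is then exactly the right statement.
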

\begin{proof}
The proof is similar to the proof of Lemma~\ref{lem:CF-complex}. Recall, that an edge $(i,j)$ in the compatibility forest is heavy if the number of nodes in the tree rooted at $i$ is two times greater than the number of nodes in a tree rooted at $j$. We count the number of heavy dashed edges created by the sequence of $k$ update operations.

Let $I_t$ be the interval after performing $t$ operations. We define $d$ as follows
$$d = \underset{1 \leq t \leq k}{\mathsf{max}}\{|J| \mid J \subseteq I_t \text{ and for every } i,j \in J,\, i \text{ overlap } j \}.$$

Let $i$ be deleted or inserted interval. In the monotonic case, there is at most one interval that exchanges children with $i$. Here we need to take into account all the intervals $j_1, \dots, j_m$, that covers $i$. When we insert $i$, all children of each $j_t$ becomes children of $i$. When we delete  $i$, the children of $i$ are distributed among the intervals, covering $i$. Since there are at most $d$ such intervals, and every path from $j_t$ to the root has at most $\log n$ light dashed edges, an update operation creates at most $d\cdot \log n$ heavy dashed edges. Thus, in total  $\mathsf{expose}$ operation traverses $O(d\cdot k\log n)$ dashed edges in a sequence of $k$ operations. \qed
\end{proof}

\begin{theorem}\label{thm:CF-gen-amortized-time}
The algorithms $\mathsf{queryCF}$, $\mathsf{insertCF}$ and $\mathsf{removeCF}$ solve the dynamic interval scheduling problem. The algorithms perform insert interval and remove interval operations in $O(d\log^2 n)$ amortised time and query operation in $O(\log n)$ amortised time, where $n$ is the size of the set $I$ of intervals and $d$ is the size of a maximal subset of pairwise overlapping intervals.
\end{theorem}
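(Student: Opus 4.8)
The plan is to combine the three ingredients already assembled in this section exactly as Theorem~\ref{thm:CF-amortized-time} combined its analogues: correctness of the data-structure invariants, the bound on the total number of dashed edges traversed by \textsf{expose}, and the amortised cost of the splay-tree and search-tree primitives. First I would argue correctness. The invariants (A1) and (A2) carry over verbatim; I would check that $\mathsf{insertCF}$ preserves them. The key point is that after adding $i$ to $T(I)$ and $IT$, the only edges of $\mathcal{F}(I)$ that change are (i) the edge into the $\preceq$-next interval $r$ not covering $i$ (the interval $r$ satisfying conditions (i)--(iii) in the description), which may now point to $i$ instead, and (ii) for each interval $c$ covering $i$, the edge $\mathsf{rc}(c)$, since $i$ becomes the new right compatible interval of $c$. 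The algorithm makes precisely the corresponding solid edges dashed, so after the final $\mathsf{expose}$ of the $\preceq$-least interval, (A1) and (A2) hold again; $\mathsf{removeCF}$ and $\mathsf{queryCF}$ are literally the monotonic routines, already handled by Lemma~\ref{lem:CF-correct}. I also need that $\mathsf{right\_compatible}$ on the augmented $T(I)$ returns $\mathsf{rc}(i)$ for non-monotonic $I$ — this is the content of the augmentation discussion preceding the theorem (tracking, at each left-descent, the minimum finishing time among the remembered interval, the current node, and the right subtree's minimum), and that $\mathsf{covers}(i)$ correctly enumerates all intervals covering $i$ via Algorithm~\ref{alg:covered}.

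Next, the complexity. For the query operation nothing changes from the monotonic case: one $\mathsf{minimum}$ on $T(I)$ plus one $\mathsf{find}$ in a splay tree, hence $O(\log n)$ amortised. For an update operation, $\mathsf{insertCF}$ performs a constant number of operations on $T(I)$ and $IT$ (each $O(\log n)$, including the maintenance of the $I(v)$ sets during rotations as described), then for each of the at most $d$ intervals in $\mathsf{covers}(i)$ a constant number of splay-tree operations, and finally one $\mathsf{expose}$; $\mathsf{removeCF}$ is cheaper still. So the per-operation cost outside \textsf{expose} is $O(d\log n)$ amortised. For \textsf{expose}, I invoke the preceding lemma: over a sequence of $k$ update operations the total number of dashed edges traversed is $O(d\,k\log n)$, and each such traversal costs $O(\log n)$ amortised for the $\mathsf{split}$/$\mathsf{join}$ on splay trees plus one $\mathsf{right\_compatible}$ call, i.e. $O(\log n)$. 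Summing, the total \textsf{expose} cost over $k$ updates is $O(d\,k\log^2 n)$, so the amortised cost per update is $O(d\log^2 n)$, which absorbs the $O(d\log n)$ term from the rest of $\mathsf{insertCF}$.

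Putting these together: the algorithms maintain (A1) and (A2), so by (A2) $\mathsf{queryCF}(i)$ returns \textsf{true} iff $i$ is on the path from the $\preceq$-least interval to a root, i.e. iff $i$ is in the greedy optimal set; hence they solve the dynamic interval scheduling problem. The amortised bounds are $O(\log n)$ for \textsf{query} and $O(d\log^2 n)$ for \textsf{insert} and \textsf{remove}, as claimed. I expect the only genuinely delicate step to be the correctness of $\mathsf{insertCF}$ — specifically, verifying that the set of solid edges it converts to dashed is exactly the set of forest edges that actually change when $i$ is inserted, so that the subsequent $\mathsf{expose}$ restores (A1); the $\mathsf{covers}(i)$-driven loop is what makes this work and is the part most in need of a careful argument, mirroring the picture in Figure~\ref{fig:bad-insert}.
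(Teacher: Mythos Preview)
Your overall plan is exactly the paper's (implicit) argument: the theorem is stated immediately after the lemma bounding the number of dashed edges by $O(d\,k\log n)$, and the intended proof is precisely the combination you describe --- invariants (A1)/(A2) for correctness, that lemma for the \textsf{expose} cost, and the $O(\log n)$ splay/search primitives --- mirroring how Theorem~\ref{thm:CF-amortized-time} followed from Lemmas~\ref{lem:CF-correct} and~\ref{lem:CF-complex}.

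There is, however, one concrete slip in your correctness sketch for $\mathsf{insertCF}$. In item (ii) you write that for each interval $c$ covering $i$, ``the edge $\mathsf{rc}(c)$'' changes ``since $i$ becomes the new right compatible interval of $c$''. This cannot be right: if $c$ covers $i$ then $c$ and $i$ overlap, so $i$ is never $\mathsf{rc}(c)$. What actually changes is the \emph{incoming} edge $(j,c)$ for those $j$ with $\mathsf{rc}(j)=c$ that are compatible with $i$; for such $j$ the new right compatible interval is $i$, not $c$. That is exactly what lines 8--11 of Algorithm~\ref{alg:CF-insert-gen} do (they look at the solid predecessor $j$ of $c$ and split at $c$), and it is what Figure~\ref{fig:bad-insert} depicts. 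Once you fix this sentence, the verification that $\{r\}\cup\mathsf{covers}(i)$ exhausts all possible old parents whose incoming solid edge must be dashed goes through: if $k\succ i$ does not cover $i$ then $s(k)>s(i)>f(j)$, so $j$ is already compatible with $r\preceq k$, forcing $k=r$. Everything else in your proposal is fine.
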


\section{Linearised Tree  Data Structure}\label{sec:LT}

In this section, we develop a new data structure for the dynamic interval scheduling problem. The dynamic algorithm based on this data structure performs all operations in amortised $O(\log n)$ time. However, the algorithm requires the interval set to be monotonic at all times.

\subsection{Definition of Linearised Tree}

We say that intervals $i$ and $j$ are {\em equivalent}, written as $i\sim j$,  iff $\mathsf{rc}(i)=\mathsf{rc}(j)$. Denote the equivalence class of  $i$ by $[i]$. Thus, two intervals are in the same equivalence class  if they are siblings in the compatibility forest. In the linearised tree we arrange all intervals in an equivalence class in a path using the $\preceq$-order. The linearised tree consists of all such ``linearised'' equivalence classes joined by edges. Hence, there are two types of edges in the linearised tree. The first type connects intervals in the same equivalence class. The second type joins the greatest interval in an equivalence class with its right compatible interval. Formally, the {\em linearised tree} ${\cal L}(I)$ is a triple $(I;E_\sim, E_c)$, where $E_\sim$ and $E_c$ are disjoint set of edges such that:

\begin{itemize}
\item $(i,j)\in E_{\sim}$ if and only if $i\sim j$ and $i$ is the previous interval of $j$.
Call $i$ the {\em equivalent child} of $j$.
\item $(i, j) \in E_c$ if and only if $i$ is  the greatest interval in $[i]$ and $j=\mathsf{rc}(i)$. Call $i$ the {\em compatible child} of $j$.
\end{itemize}

Figure~\ref{fig:LT1} shows an example of a linearised tree. We stress three crucial differences between the $\mathsf{CF}$ and $\mathsf{LT}$ data structures. The first is that a path in a linearised tree may not be a compatible set of intervals. The second is that  linearised trees are binary. The third is when we insert or remove an interval we need to redirect at most two existing edges in the linearised tree. We explain the last fact in more details below when we introduce the dynamic algorithm.

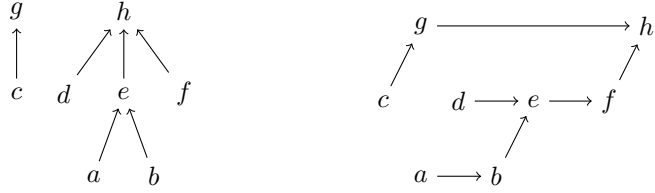
\begin{figure}[!ht]
\centering	
\begin{tikzpicture}[<-, level distance=11mm, sibling distance = 8mm]
		\node (g) {$g$}
			child {node {$c$}};

		\node (h) [right=12mm] {$h$}
			child {node {$d$}}
			child {node {$e$}
				child {node {$a$}}
				child {node {$b$}}}
			child {node {$f$}};				
\end{tikzpicture}
\hspace{20mm}
\begin{tikzpicture}[<-, scale = 0.5]
	\foreach \name/\x/\y in {a/0/0, b/2/0, d/1/2, e/3/2, f/5/2, c/-1/2,  g/0/4, h/6/4}
		\node (\name) at (\x, \y) {$\name$};

	\path (b) edge (a);
	\path (g) edge (c);
	\path (f) edge (e);
	\path (e) edge (d) edge (b);
	\path (h) edge (g) edge (f);
	
\end{tikzpicture}
\caption{\label{fig:LT1} Example of a compatibility forest (left) and  linearised tree (right).}	
\end{figure}

We use the dynamic tree data structure to represent the linearised tree. 
We also maintain the interval tree $T(I)$
as an auxiliary data structure. The interval tree is used to compute previous and next intervals as well as left compatible and right compatible intervals of a given interval.

\medskip

\subsection{Maintaining Linearised Tree}
To maintain $\mathsf{LT}$ we will need $\mathsf{left\_compatible}(i)$ operations, which returns $\mathsf{lc}(i)$ if it is in the interval set or $\mathsf{nil}$ otherwise. Algorithms~\ref{alg:lc} defines this operation.

\begin{algorithm}[H]
	\caption{$\mathsf{left\_compatible}(i)$}\label{alg:lc}
	\begin{algorithmic}[1]
        \State $\ell\leftarrow \mathsf{nil}$
        \State $j\leftarrow$ the root in the interval tree $T(I)$.
        \While{$j\neq \mathsf{nil}$}
            \If{$j\succeq i$ or $j$ overlaps $i$}
                \State $j\leftarrow$ the left child of $j$
            \Else
                \State $\ell\leftarrow j$
                \State $j\leftarrow$ the right child of $j$
            \EndIf
        \EndWhile
        \State \textbf{return} $\ell$
	\end{algorithmic}
\end{algorithm}

We now describe algorithms for maintaining linearised tree data structure. We call the algorithms $\mathsf{queryLT}$, $\mathsf{insertLT}$ and $\mathsf{removeLT}$  for the query, insertion, and removal operations, respectively.
\smallskip

\noindent {\em The operation $\mathsf{queryLT}$:} To detect if an interval $i$ is in the greedy optimal set,
consider the path  $P$  from the least node $m$ to the root in the linearised tree~${\cal L}(I)$. If $i \notin P$, return $\mathsf{false}$. Otherwise, consider the direct predecessor $j$ of $i$ in the path $P$. If $j$ does not exist or $(j,i) \in E_c$, return $\mathsf{true}$. Otherwise, we return $\mathsf{false}$.

\begin{algorithm}[H]
	\caption{$\mathsf{queryLT}(i)$}\label{alg:LT-query}
	\begin{algorithmic}[1]
        \State $m \leftarrow \mathsf{minimum}(T(I))$
 				\If{$i = m$} 																	\Comment $i$ is the least interval
        		\State {\bf return} $\mathsf{true}$						
	      \EndIf
        \State $\mathsf{expose}(m)$										\Comment Make the path from $m$ to the root solid
 				\If{$i \neq \mathsf{find}(\ST_m, i)$}          \Comment $i$ is not on the path from $m$ to the root
                \State {\bf return} $\mathsf{false}$
        \EndIf
				
        \State $j \leftarrow \mathsf{predecessor}(\ST_{m}, i)$ \Comment $(j,i)$ is an edge in $\mathsf{LT}$
         \If{$i$ is compatible with $j$}               
        		\State  {\bf return} $\mathsf{true}$						
	    \Else
	    	\State  {\bf return} $\mathsf{false}$
       \EndIf
	\end{algorithmic}
\end{algorithm}

\begin{lemma}\label{lem:query-LT}
The operation $\mathsf{queryLT}(i)$ returns $\mathsf{true}$ if and only if a given interval $i$ belongs to the greedy optimal set of $I$.
\end{lemma}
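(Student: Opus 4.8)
The plan is to show that the algorithm $\mathsf{queryLT}(i)$ faithfully reconstructs the greedy rule from the structure of the linearised tree ${\cal L}(I)$. First I would recall the characterisation of the greedy optimal set $J$ in terms of the compatibility forest ${\cal F}(I)$: by the discussion in Section~\ref{sec:CF}, $J$ is exactly the set of nodes on the directed path in ${\cal F}(I)$ that starts at the least interval $m$ and follows parent (i.e.\ $\mathsf{rc}$) edges to a root. So $i \in J$ iff there is a sequence $m = i_1, i_2, \dots, i_k = i$ with $i_{t+1} = \mathsf{rc}(i_t)$ for all $t$. The task is to read off this condition from the path $P$ from $m$ to the root in ${\cal L}(I)$, which is what $\mathsf{expose}(m)$ exposes and what the algorithm inspects via $\ST_m$.

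The key structural observation I would establish is the following. Consider the path $P$ from $m$ to the root in ${\cal L}(I)$, written $m = v_1, v_2, \dots, v_\ell = \text{root}$, where each $(v_t, v_{t+1})$ is either an $E_\sim$ edge or an $E_c$ edge. Because $E_\sim$ edges go from an interval to its $\preceq$-successor \emph{within the same equivalence class} and $E_c$ edges go from the greatest element of a class to its $\mathsf{rc}$, the vertices of $P$ partition into consecutive blocks, each block being an initial segment (under $\preceq$) of one equivalence class, and consecutive blocks are linked by a single $E_c$ edge which realises one application of $\mathsf{rc}$. Crucially, $m$ being the least interval in $I$ means $m$ is the least element of its equivalence class $[m]$, so the first block of $P$ is an initial segment of $[m]$ starting at its minimum. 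Now I claim: $i$ lies on the greedy path in ${\cal F}(I)$ starting from $m$ if and only if $i \in P$ and the direct predecessor $j$ of $i$ on $P$ either does not exist (i.e.\ $i = m$) or satisfies $(j,i) \in E_c$, which, since $j = \mathsf{lc}$-type relationships translate to compatibility, is exactly the test ``$i$ is compatible with $j$'' used in line~10.

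For the forward direction, if $i \in J$ then $i$ is the least element of $[i]$ (it is an $\mathsf{rc}$-child of some $i_{t}$, hence the least interval compatible with $i_t$, hence the minimum of its class), and the greedy path reaches $i$ through its compatible parent $i_{k-1}$; tracing through the blocks of $P$, $i$ is the bottom of its block and its predecessor on $P$ is the greatest element of $[i_{k-1}]$, joined to $i$ by an $E_c$ edge, so the test succeeds. Conversely, if the algorithm returns $\mathsf{true}$ via line~10, then $i \in P$ and its predecessor $j$ on $P$ satisfies $i$ compatible with $j$; since within a block all intervals overlap (they share the same $\mathsf{rc}$, and in a monotonic set equivalent intervals pairwise overlap — this needs the monotonicity hypothesis and the observation used for Lemma~\ref{lem:rc}), compatibility of $j$ and $i$ forces $(j,i)$ to be an $E_c$ edge rather than an $E_\sim$ edge, so $i$ is the minimum of $[i]$ and $j = \mathsf{rc}^{-1}$ along the path; walking the $E_c$ edges of $P$ backwards from $i$ to $m$ then exhibits $i$ as a greedy-path node. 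The case $i = m$ is handled separately in lines~2--3, and the case $i \notin P$ gives $\mathsf{false}$ correctly since $J \subseteq P$.

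The main obstacle is the second claim above: proving that within one linearised equivalence-class block the consecutive intervals overlap, so that a \emph{compatible} predecessor on $P$ can only be reached through an $E_c$ edge. This is where monotonicity is essential — I would invoke the same fact underlying Lemma~\ref{lem:rc} (for a monotonic set, if $i \preceq j$ share $\mathsf{rc}$ then every interval $\preceq$-between them overlaps both, and in particular $i$ overlaps $j$) to conclude that an $E_\sim$-edge never connects two compatible intervals, whereas an $E_c$-edge always does by definition of $\mathsf{rc}$. Once this dichotomy is in place, the equivalence between ``the test at line~10 succeeds'' and ``$i$ sits on the greedy $\mathsf{rc}$-path from $m$'' follows by an induction on the number of $E_c$ edges on the portion of $P$ below $i$, and correctness of $\mathsf{queryLT}$ is immediate.
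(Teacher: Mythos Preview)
Your overall strategy---characterising $J$ via the path $P$ from $m$ in ${\cal L}(I)$ and using the dichotomy that $E_\sim$-edges join overlapping intervals while $E_c$-edges join compatible ones---is sound and is essentially the paper's approach. But one load-bearing claim is false: you assert that if $i\in J$ then $i$ is the least element of $[i]$, and correspondingly that the blocks of $P$ are \emph{initial} segments of equivalence classes. In the paper's own example (Figures~\ref{fig:compat} and~\ref{fig:LT1}) one has $J=\{a,e,h\}$ and $P=(a,b,e,f,h)$, yet $[e]=\{d,e,f\}$ with $d\prec e$; so $e\in J$ is not the minimum of its class, and the block $\{e,f\}$ of $P$ is a \emph{final} (upper) segment of $[e]$, not an initial one. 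The flaw is the step ``$i$ is the least interval compatible with $i_t$, hence the minimum of its class'': an interval $i'\prec i$ with $i'\in[i]$ may overlap $i_t$ (as $d$ overlaps $a$) and thus simply not compete for $\mathsf{rc}(i_t)$.

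The repair, which is exactly what the paper does, is to replace ``least in $[i]$'' by ``least in $[i]\cap P$''. An easy induction on $t$ shows that $P$ enters each class $[i_t]$ precisely at $i_t$ (via the $E_c$-edge from $\max[i_{t-1}]$) and then climbs to $\max[i_t]$ by $E_\sim$-edges; hence $i\in J$ iff $i$ is the first element of its block on $P$, i.e.\ iff its $P$-predecessor is joined to it by an $E_c$-edge. Your compatibility dichotomy then finishes the argument as you outlined. As a side remark, that dichotomy does not actually need monotonicity or the observation behind Lemma~\ref{lem:rc}: if $i\prec j$ are compatible then $j$ itself is a candidate for $\mathsf{rc}(i)$, giving $\mathsf{rc}(i)\preceq j\prec\mathsf{rc}(j)$ and hence $i\not\sim j$; so distinct equivalent intervals always overlap.
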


\begin{proof}
Let $J$ be the greedy optimal set of $I$ and $m$ be the least element of $I$. Suppose the algorithm $\mathsf{queryLT}(i)$ outputs $\mathsf{true}$. This can happen when (1) $i = m$.
In this case $i$ is the least element of $I$, hence $i$ belongs to $J$; or (2) $i$ is compatible with $\mathsf{predecessor}(\ST_{m},i)$. Note that for every interval $x$ from $\ST_{m}$ there exists an interval $y$ from the greedy set $J$ such that $y\in[x]$. Consider such an interval $y\in J\cap[\mathsf{predecessor}(\ST_{m},i)]$. Since $y\in J$ and $i$ is the next compatible interval of $y$, $i$ belongs to the greedy optimal set $J$.

It is not hard to see by induction on the number of elements in $J$ that $J\subseteq \ST_{m}$. Suppose the algorithm $\mathsf{queryLT}(i)$ outputs
$\mathsf{false}$. It happens in two cases. First, $i$ is not in $\ST_{m}$. Then $i\notin J$. Second, $\mathsf{predecessor}(\ST_{m},i)$ exists and is not compatible with $i$. Then $i$ is not the least interval in $[i]\cap \ST_{m}$, but every element $x\in J$ is the least element in $[x]\cap \ST_{m}$. Hence $i\notin J$.
\qed
\end{proof}

\noindent {\em The operation $\mathsf{insertLT}$:} Given $i$, we insert $i$ into $T(I)$.
If $i$ is the greatest interval in $[i]$, then we add the edge $(i, \mathsf{rc}(i))$ into $E_c$. Otherwise,
we add the edge $(i,j)$ to $E_{\sim}$, where $j$ is the next interval equivalent to $i$.
If $i$ has an equivalent child $k$ then we add the edge $(k,i)$ to $E_{\sim}$ and delete the old outgoing edge from $k$ in case such edge exists.  If $i$ has a compatible child $\ell$ then we add the edge $(\ell, i)$ to $E_c$ and delete the old outgoing edge in case such edge exists.

\begin{algorithm}[H]
	\caption{$\mathsf{insertLT}(i)$}\label{alg:LT-insert}
	\begin{algorithmic}[1]
        \State $\mathsf{insert}(T(I),i)$ 								
	\If{$i$ is not the greatest interval in $I \cup \{i\}$}   \Comment $i$ has a parent          		
	    \If {$\mathsf{lc}(\mathsf{rc}(i)) = i$}								\Comment $(i, \mathsf{rc}(i)) \in E_c$
                \State $\mathsf{link}(i,\mathsf{rc}(i))$      
	    \Else																									\Comment $(i, \mathsf{next}(i)) \in E_\sim$
	       \State $\mathsf{link}(i, \mathsf{next}(i))$          
           \EndIf
        \EndIf
        \State $j\leftarrow \mathsf{previous}(i)$							
         \If{$\mathsf{rc}(j)=\mathsf{rc}(i)$}									\Comment $(j, i) \in E_\sim$
	       \State $\mathsf{cut}(j)$ and $\mathsf{link}(j,i)$
        \EndIf
         \State $j \leftarrow \mathsf{lc}(i)$
        \If{$\mathsf{rc}(j) = i$}															\Comment $(j, i) \in E_c$
            \State $\mathsf{cut}(j)$ and $\mathsf{link}(j,i)$
        \EndIf
	\end{algorithmic}
\end{algorithm}

\noindent {\em The operation $\mathsf{removeLT}$:}  Given $i$, we delete $i$ from $T(I)$. We delete an edge from $i$ to the parent of $i$ and redirect the edge from the equivalent child $j$ of $i$ to the parent of $i$. Then we redirect an edge from the compatible child $\ell$ of $i$. Removing $i$ may add new intervals to the equivalence class of $\ell$. Therefore if $\ell$ is still the greatest interval in the updated equivalence class, we add an edge $(\ell, \mathsf{rc}(\ell)$ to $E_c$. Otherwise, we add the edge $(i,j)$ to $E_{\sim}$, where $j$ is the next interval of~$\ell$.

\begin{algorithm}[H]
	\caption{$\mathsf{removeLT}(i)$}\label{alg:LT-remove}
	\begin{algorithmic}[1]
	\If{$i$ is not the root}
		\State $\mathsf{cut}(i)$	
	\EndIf	
	\State $j \leftarrow \mathsf{previous}(i)$
        \If{$\mathsf{rc}(j)=\mathsf{rc}(i)$}		\Comment $(j, i) \in E_\sim$ 
        		\State $\mathsf{cut}(j)$
		        \If{$i = \mathsf{lc}(\mathsf{rc}(i))$}					\Comment $\mathsf{rc}(i)$ is a new parent of $j$      					
		        	\State $\mathsf{link}(j,\mathsf{rc}(i))$			
						\ElsIf{$i$ is not the the root } 								\Comment $\mathsf{next}(i)$ is a new parent of $j$
							\State $\mathsf{link}(j, \mathsf{next}(i))$		
      			\EndIf
	\EndIf
        \State $j \leftarrow \mathsf{lc}(i)$
        \If{$i =\mathsf{rc}(j)$}							\Comment $(j, i) \in E_c$
            \State $\mathsf{cut}(j)$
            \State $\mathsf{remove}(T(I),i)$
            \State $k \leftarrow \mathsf{next}(j)$ 
            \If{$j$ is not the root }
	    \If{$\mathsf{rc}(k) = \mathsf{rc}(j)$}  \Comment $\mathsf{rc}(j) \ne i$ as we removed $i$ from $T(I)$.
        		\State $\mathsf{link}(j,k)$
	    \Else
	    	\State $\mathsf{link}(j,\mathsf{rc}(j))$
             \EndIf
             \EndIf
	  \Else \State $\mathsf{remove}(T(I),i)$
        \EndIf
	\end{algorithmic}
\end{algorithm}

\subsection{Correctness of the update operations}

To prove correctness of the algorithms above, we state two claims about linearised trees. The first claim allows us to check if the given interval the greatest in its equivalent class. The second claim says that changes of the linearised tree after insertion or deletion of an interval $i$ are local with respect to~$i$. We abuse notation and write $(I; E)$ instead of $(I; E_c, E_\sim)$ and $(I'; E')$ instead of~$(I'; E'_c, E'_\sim)$. Which edges are used will be clear from the context.

\begin{claim}\label{lem:greatest}
An interval $i$ is the greatest in $[i]$ if and only if $\mathsf{lc}(\mathsf{rc}(i))=i$.
\end{claim}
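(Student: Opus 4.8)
The plan is to prove both directions of the biconditional directly from the definitions of $\mathsf{rc}$, $\mathsf{lc}$, and the equivalence relation $\sim$, together with the basic observation about monotonic sets used in Lemma~\ref{lem:rc} (if $i$ overlaps $j$ then every interval $\preceq$-between $i$ and $j$ overlaps both). Write $j = \mathsf{rc}(i)$; by definition $j$ is the $\preceq$-least interval compatible with $i$ with $i \prec j$, so in particular $f(i) < s(j)$ and every interval $\ell$ with $i \prec \ell \prec j$ overlaps $i$. Note $\mathsf{lc}(j)$ is the $\preceq$-greatest interval compatible with $j$ that is $\prec j$; since $I$ is monotonic, $\preceq$ agrees with ordering by starting time, so ``compatible with $j$ and $\prec j$'' means finishing before $s(j)$.

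For the forward direction, suppose $i$ is the greatest element of $[i]$. I will show $\mathsf{lc}(j) = i$. First, $i$ itself is compatible with $j$ (since $j = \mathsf{rc}(i)$) and $i \prec j$, so $i$ is a candidate for $\mathsf{lc}(j)$; hence $i \preceq \mathsf{lc}(j) \prec j$. Suppose for contradiction $i \prec \mathsf{lc}(j)$, and let $k = \mathsf{lc}(j)$. Then $i \prec k \prec j$, and $k$ is compatible with $j$. I claim $\mathsf{rc}(i) = j$ forces $k$ to overlap $i$: indeed every interval strictly between $i$ and $j$ overlaps $i$ by the choice of $j$ as $\mathsf{rc}(i)$. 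But now consider $\mathsf{rc}(k)$: since $k$ is compatible with $j$ and $k \prec j$, we have $\mathsf{rc}(k) \preceq j$; and since $i \prec k$, any interval compatible with $k$ and $\succ k$ is $\succ k \succ i$, while by monotonicity it also has starting time $> f(k) \ge f(i)$... the key point to pin down is that $\mathsf{rc}(k) = j$, which would give $k \sim i$ with $i \prec k$, contradicting maximality of $i$ in $[i]$. To get $\mathsf{rc}(k) = j$: $\mathsf{rc}(k)$ is the least interval $\succ k$ compatible with $k$; $j$ qualifies; and anything $\ell$ with $k \prec \ell \prec j$ overlaps $i$ hence (by the monotonic observation applied between $i$ and $j$, or between the relevant pair) overlaps $k$ as well — this is the step I must do carefully.

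For the reverse direction, suppose $\mathsf{lc}(\mathsf{rc}(i)) = i$, i.e. $\mathsf{lc}(j) = i$. I must show $i$ is greatest in $[i]$. Suppose not; then there is $k \in [i]$ with $i \prec k$, so $\mathsf{rc}(k) = \mathsf{rc}(i) = j$. Then $k$ is compatible with $j$ and $k \prec j$, so $k$ is a candidate for $\mathsf{lc}(j)$, giving $\mathsf{lc}(j) \succeq k \succ i$, contradicting $\mathsf{lc}(j) = i$. This direction is short and essentially immediate from the definitions.

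I expect the main obstacle to be the forward direction, specifically establishing that the candidate $k = \mathsf{lc}(j)$ strictly above $i$ must satisfy $\mathsf{rc}(k) = j$ rather than some interval strictly between $k$ and $j$. This is exactly where the monotonicity observation of Lemma~\ref{lem:rc} is needed: any interval $\ell$ with $k \prec \ell \prec j$ lies strictly between $i$ and $j$, hence overlaps $i$; combined with $k \prec \ell$ and monotonicity (intervals between $i$ and $\ell$ overlapping both), one deduces $\ell$ overlaps $k$, so no such $\ell$ is compatible with $k$, forcing $\mathsf{rc}(k) = j$. Once that lemma-style fact is invoked cleanly, both directions close quickly.
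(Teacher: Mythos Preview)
Your proposal is correct and follows the same contradiction-based structure as the paper's proof in both directions: the reverse direction is essentially identical, and in the forward direction both arguments take $k=\mathsf{lc}(\mathsf{rc}(i))$, observe $i\preceq k$, and derive a contradiction with maximality of $i$ in $[i]$. You are actually more careful than the paper, which asserts the contradiction without spelling out why $k\sim i$; your route via the overlapping observation works, though a slightly quicker way to get $\mathsf{rc}(k)=j$ is to note that $i\prec k$ gives $f(i)<f(k)$, so any interval right-compatible with $k$ is automatically right-compatible with $i$, yielding $\mathsf{rc}(i)\preceq\mathsf{rc}(k)\preceq j=\mathsf{rc}(i)$.
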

\begin{proof}
Let $i$ be the greatest interval in $[i]$. Then for any $j \in [i]$ we have that $j \preceq i$. Assume that $k = \mathsf{lc}(\mathsf{rc}(i))\ne i$. Then $i \preceq k$ which is a contradiction.

For the other direction, assume that $i$ is not the greatest interval in its equivalent class, that is there exists $j \in [i]$ such that $i \prec j$. Clearly, $j$ is compatible with $\mathsf{rc}(i)$. Therefore $\mathsf{lc}(\mathsf{rc}(i))=j$, witch is a contradiction. \qed
\end{proof}

\begin{claim}\label{lem:diffLT}
Let ${\cal L}(I) = (I, E)$ and ${\cal L}(I') = (I', E')$ be two linearised trees such that $I' = I\cup\{i\}$. Let $j$ and $k$ be intervals from the set $I'$. Then the following properties are satisfied:
\begin{itemize}
\item[(1)] if $(j,k) \not \in E$ and $(j,k) \in E'$, then either $j = i$ or $k = i$.
\item[(2)] if $(j,k) \in E$ and $(j,k) \not \in E'$, then $(j,i) \in E'$.
\end{itemize}
\end{claim}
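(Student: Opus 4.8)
The plan is to prove both parts by a direct case analysis based on the structure of the linearised tree and on how the edge sets $E_\sim$ and $E_c$ are defined in terms of the $\preceq$-order and the right compatible relation $\mathsf{rc}$. The key observation driving everything is that adding one interval $i$ to $I$ can only affect $\mathsf{rc}(\ell)$ for those $\ell$ with $\ell \prec i$ and $\ell$ compatible with $i$ and $\mathsf{rc}_I(\ell) \succ i$; by monotonicity (and the observation used for Lemma~\ref{lem:rc}) the set of such $\ell$ is exactly $\{j \in I : j \prec i,\ \mathsf{rc}_{I'}(j) = i\}$, i.e.\ the equivalence class $[i]_{E'}$ restricted to $I$. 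So outside that class, $\mathsf{rc}$ is unchanged, and hence $E_\sim$ and $E_c$ restricted to pairs not involving $i$ can only change along the ``seam'' where this class is created.

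For part (1), I would argue contrapositively: suppose $(j,k) \in E'$ with $j \neq i$ and $k \neq i$; I want $(j,k) \in E$. Split into the two edge types. If $(j,k) \in E'_\sim$, then $\mathsf{rc}_{I'}(j) = \mathsf{rc}_{I'}(k)$ and $j$ is the previous interval of $k$ in $I'$; since $j,k \neq i$, $j$ is still the previous interval of $k$ in $I$, and I must check $\mathsf{rc}_I(j) = \mathsf{rc}_I(k)$. The only way this could fail is if exactly one of $j,k$ had its $\mathsf{rc}$ changed by inserting $i$ — but if $\mathsf{rc}_{I'}(j) = \mathsf{rc}_{I'}(k) = i$ that would force $j = i$ or (by the remark that no interval strictly between two equivalent intervals can break equivalence) put $i$ between $j$ and $k$, contradicting that $j$ is the $I'$-predecessor of $k$; and if $\mathsf{rc}_{I'}(j) = \mathsf{rc}_{I'}(k) \neq i$, then inserting $i$ changed neither value, so $\mathsf{rc}_I(j) = \mathsf{rc}_I(k)$, giving $(j,k) \in E_\sim$. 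The case $(j,k) \in E'_c$ is similar: $k = \mathsf{rc}_{I'}(j)$ and $j$ is greatest in $[j]_{E'}$; with $j,k\neq i$ one shows $k = \mathsf{rc}_I(j)$ (again $k \neq i$ rules out the changed case) and $j$ remains greatest in its class in $I$, using Claim~\ref{lem:greatest} and the fact that removing $i$ from the class can only make $j$ ``more'' maximal, not less.

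For part (2), suppose $(j,k) \in E$ but $(j,k) \notin E'$. Since the edge existed in ${\cal L}(I)$ and disappeared, the insertion of $i$ must have disrupted it, and by the locality observation above the disruption must route through $i$; I claim concretely $(j,i) \in E'$. If $(j,k) \in E_\sim$: then $\mathsf{rc}_I(j) = \mathsf{rc}_I(k)$ and $j$ is the $I$-predecessor of $k$. For this to fail in $I'$, either $i$ now sits strictly between $j$ and $k$ (so $j \prec i \prec k$) — and then one checks $\mathsf{rc}_{I'}(j) = i$ would need to hold for $(j,i)\in E'_\sim$, which it does because $i$ lies between $j$ and $k$ in the same old equivalence class so it is equivalent to $j$ in $I'$ — or $\mathsf{rc}_{I'}(j)$ itself changed to $i$ while $j$ stays the predecessor of $k$, again giving $(j,i) \in E'_\sim$; or $j$ becomes the greatest in its (shrunk) class with $\mathsf{rc}_{I'}(j) = i$, giving $(j,i) \in E'_c$. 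If $(j,k) \in E_c$: then $k = \mathsf{rc}_I(j)$ and $j$ greatest in $[j]_{E}$. The edge breaks only if $\mathsf{rc}_{I'}(j) = i \neq k$ (so now $(j,i) \in E'_c$, since $j$ is still greatest in its class as $i$ is compatible with $j$ and larger) or if $j$ is no longer greatest in its class because $i$ joined it, i.e.\ $j \prec i$ and $i \sim j$ in $I'$, in which case $(j,i) \in E'_\sim$. In every subcase the new edge from $j$ goes to $i$.

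The main obstacle I expect is the bookkeeping around $\mathsf{rc}$ and the equivalence classes when $i$ is inserted into the ``middle'' of an existing class: one must argue carefully, using monotonicity and the betweenness observation from Lemma~\ref{lem:rc}, that $i$ is genuinely equivalent to the members of that class in $I'$ and sits in the right place in the $\preceq$-order, so that the redirected edge indeed has $i$ as an endpoint rather than some other interval. Getting the quantifiers right in "the $\mathsf{rc}$ of $\ell$ changed if and only if $\mathsf{rc}_{I'}(\ell) = i$" is the crux; once that equivalence is pinned down, both parts fall out of a short finite case check on the two edge types.
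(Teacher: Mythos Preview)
Your approach is correct, and for part~(1) it is in fact more detailed than the paper's own argument, which simply asserts that two intervals of $I$ not joined in $\mathcal{L}(I)$ cannot become joined in the larger tree $\mathcal{L}(I')$ and leaves the verification to the reader. Your case analysis on $E_\sim$ versus $E_c$, together with the observation that $\mathsf{rc}_I(\ell)\neq\mathsf{rc}_{I'}(\ell)$ only when $\mathsf{rc}_{I'}(\ell)=i$, is exactly what is needed to justify that assertion.

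For part~(2), however, you are working much harder than necessary. The paper derives~(2) from~(1) in two lines, using only the tree structure: since $(j,k)\in E$ we have $j\in I$ and $j$ is not the root of $\mathcal{L}(I)$, hence $j$ is not the root of $\mathcal{L}(I')$ either and has a (unique) parent $\ell$ there. As $(j,k)\notin E'$, we have $\ell\neq k$; but $k$ was the unique parent of $j$ in $\mathcal{L}(I)$, so $(j,\ell)\notin E$. Now part~(1) applied to the edge $(j,\ell)\in E'\setminus E$ gives $j=i$ or $\ell=i$, and $j\in I$ rules out the first, so $\ell=i$. This avoids all the subcase bookkeeping about whether the broken edge was in $E_\sim$ or $E_c$, whether $i$ landed between $j$ and $k$, and so on. Your subcase analysis does go through (with one small slip: in the subcase $j\prec i\prec k$ with $j$ compatible with $i$, the new edge $(j,i)$ lies in $E'_c$, not $E'_\sim$, since $\mathsf{rc}_{I'}(j)=i$), but the structural argument via uniqueness of parents is both shorter and more robust.
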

\begin{proof}
For the first property, we note that if two intervals from $I$ are not connected by an edge in ${\cal L}(I)$ then they are not connected by an edge in a bigger linearised tree ${\cal L}(I')$. Hence either $j=i$ or $k=i$.
For the second property, let $\ell \neq k$ be a parent of $j$ in $\mathcal{L}(I')$. Because $(j, \ell) \not \in E$ and $(j, \ell) \in E'$, the property (1)  implies that either $j=i$ or $\ell = i$. Thus~$\ell = i$.
\qed
\end{proof}

\begin{lemma}\label{lem:insert-LT}
The operation $\mathsf{insertLT}(i)$ preserves linearised tree data structure.
\end{lemma}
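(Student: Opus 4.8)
The plan is to verify that after executing $\mathsf{insertLT}(i)$ the resulting structure is exactly the linearised tree $\mathcal{L}(I')$ where $I' = I \cup \{i\}$. By Claim~\ref{lem:diffLT}, the edge set changes only in a bounded neighbourhood of $i$: the only edges that may disappear are old outgoing edges of the equivalent child and compatible child of $i$, and the only edges that may appear are edges incident to $i$. So it suffices to check that the algorithm (a) adds the correct outgoing edge from $i$, and (b) correctly redirects the (at most two) edges whose head changes, and (c) leaves every other edge untouched. Point (c) is immediate since the algorithm only calls $\mathsf{link}$ and $\mathsf{cut}$ on $i$, on $j = \mathsf{previous}(i)$, and on $j = \mathsf{lc}(i)$.

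For (a), I would argue by cases on whether $i$ is the greatest interval in $[i]$ inside $I'$, using Claim~\ref{lem:greatest}: if $\mathsf{lc}(\mathsf{rc}(i)) = i$ then $i$ is the greatest element of its class, so by definition of $E_c$ its parent is $\mathsf{rc}(i)$, which is what line~4 does; otherwise $i$ has a strictly larger equivalent interval, and the $\preceq$-least such is $\mathsf{next}(i)$ (monotonicity guarantees $\mathsf{next}(i) \sim i$ in this case), matching line~6. The guard on line~2 handles the degenerate case where $i$ is the $\preceq$-greatest interval overall and hence has no parent.

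For (b), consider the equivalent child of $i$ in $I'$: this is $\mathsf{previous}(i)$ precisely when $\mathsf{rc}(\mathsf{previous}(i)) = \mathsf{rc}(i)$, i.e.\ when $\mathsf{previous}(i) \sim i$, which is the test on line~9; in that case in $\mathcal{L}(I)$ the interval $j = \mathsf{previous}(i)$ had some other outgoing edge (to its old parent), which must be cut, and the new edge $(j,i) \in E'_\sim$ added, exactly $\mathsf{cut}(j)$; $\mathsf{link}(j,i)$. Symmetrically, the compatible child of $i$ in $I'$ is $\mathsf{lc}(i)$ precisely when $\mathsf{rc}(\mathsf{lc}(i)) = i$ (line~12), and again we cut its stale edge and relink to $i$. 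One should also note these two cases are mutually exclusive from the outgoing-edge-of-$i$ analysis and from each other in the sense that they concern distinct source nodes, so the three $\mathsf{link}$/$\mathsf{cut}$ groups do not interfere; and that whenever a $\mathsf{cut}$ is issued the node genuinely had a parent in $\mathcal{L}(I)$ (for $\mathsf{previous}(i)$ and $\mathsf{lc}(i)$ this holds because $i$ was inserted $\preceq$-after them, hence they are not the root unless trivially excluded).

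The main obstacle I anticipate is the bookkeeping around the \emph{old} outgoing edges being removed: I must confirm that in $\mathcal{L}(I)$ the equivalent child $j = \mathsf{previous}(i)$ really did have a parent different from $i$ (it did, since $i \notin I$) and that cutting it and relinking to $i$ is the \emph{only} change to $j$'s incident edges — in particular that $j$'s own incoming edges are unaffected, which follows from Claim~\ref{lem:diffLT}(1). A subtler point is verifying that $\mathsf{next}(i)$ and $\mathsf{lc}(i)$, as computed from the already-updated $T(I)$, denote the intended intervals of $I' \setminus \{i\}$; this is a routine consequence of the interval-tree operations but needs to be stated. Once these local facts are in place, the equality of the maintained structure with $\mathcal{L}(I')$ follows edge by edge.
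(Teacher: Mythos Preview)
Your proposal is correct and follows essentially the same approach as the paper: both arguments use Claim~\ref{lem:diffLT} to localise the differences between $\mathcal{L}(I)$ and $\mathcal{L}(I')$ to edges incident to $i$, and then verify case by case (outgoing edge of $i$ via Claim~\ref{lem:greatest}, equivalent child, compatible child) that the algorithm's $\mathsf{link}$/$\mathsf{cut}$ calls realise precisely those changes. The only difference is organisational---you structure the argument around the algorithm's actions (parts (a),(b),(c)), whereas the paper structures it as a four-case biconditional on membership in $E$ versus $E'$---but the content is the same.
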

\begin{proof}
Consider intervals $j, k \in I'$, where $j\preceq k$ and $I' = I \cup \{i\}$. Let $(I',E')$ be the resulting tree after
the algorithm $\mathsf{insertLT}(i)$ is performed. We show that $(j,k) \in E'$ if and only if $(j,k)$ is an edge in $\mathcal{L}(I')$.

\smallskip

\noindent
$(\rightarrow)$ Suppose that $(j,k) \in E'$. We prove that $(j,k)$ is an edge in $\mathcal{L}(I')$.
\begin{itemize}
	\item Let $(j,k) \not \in E$. Then the algorithm $\mathsf{insertLT}$ must have added $(j,k)$ into $E'$. Any edge the algorithm adds is adjacent to 	$i$. First, we consider outgoing edges, that is, we consider the case when $j=i$. If the algorithm adds an edge from $i$ to $\mathsf{rc}(i)$, then $i = \mathsf{lc}(\mathsf{rc}(i))$ (see lines 3-4 of the Algorithm~\ref{alg:LT-insert}). By Sublemma~\ref{lem:greatest}, $i$ is the greatest interval in its equivalence class. If the algorithm adds an edge from $i$ to the next interval $k$ of $i$, then $i$ is not the greatest interval in $[i]$ and $k \sim i$ (see lines 3-6). Second, we consider incoming edges, that is, we consider the case when $k=i$. If the algorithm adds an edge from $\mathsf{lc}(i)$ to $i$, then $j = \mathsf{lc}(\mathsf{rc}(j))$ (see lines 10-12). By Sublemma~\ref{lem:greatest}, $j$ is the greatest interval in its equivalence class. If the algorithm adds an edge from the previous interval $j$ of $i$ to $i$, then $j \sim i$ (see lines 7-9). Note that any of the edges added by the algorithm is an edge in $\mathcal{L}(I')$. Hence $(j,k)$ is an edge in $\mathcal{L}(I')$.
	
	\item Let $(j,k) \in E$. Assume that $(j,k)$ is not an edge in $\mathcal{L}(I')$. By Sublemma~\ref{lem:diffLT} $(j,i)$ is an edge in $\mathcal{L}(I')$. If $j$ is the equivalent child of $i$, then $j$ is the previous interval of $i$ and $\mathsf{rc}(j) = \mathsf{rc}(i)$. If $j$ is the compatible child of $i$, then $j = \mathsf{lc}(\mathsf{rc}(j))$. In both of these cases the algorithm deletes the edge from $j$ (see lines 7-9 and 10-12 correspondently). Thus $(j,k) \not \in E'$, which is a contradiction.
	\end{itemize}

\noindent
$(\leftarrow)$ Suppose that $(j,k)$  is an edge in $\mathcal{L}(I')$. We prove that $(j,k) \in E'$.
\begin{itemize}
  \item Let $(j,k) \in E$. Assume that $(j,k) \not \in E'$. Then the algorithm $\mathsf{insertLT}$ must have deleted $(j,k)$. There are two cases: $j$ is the previous interval of $i$ and $j \sim i$  (see lines 7-9), or $i = \mathsf{rc}(j)$ and $j$ is the greatest interval in $[j]$ (see lines 10-12). In either case, $j$ is a child of $i$ in $\mathcal{L}(I')$, that is $k=i$, which is a contradiction to the assumption that $(j,k)\in E$.

  \item Let $(j,k) \not \in E$. By Sublemma~\ref{lem:diffLT}, either $j = i$ or $k=i$. Suppose $j = i$. If $i$ is the compatible child of $k$ in $\mathcal{L}(I')$, then $k = \mathsf{rc}(i)$ and, by Sublemma~\ref{lem:greatest}, $i = \mathsf{lc}(\mathsf{rc}(i))$. If $i$ is the equivalent child of $k$, then $k$ is the next interval $i$ and $k \sim i$. The algorithm $\mathsf{insertLT}$ adds the edge $(i,k)$ to $E'$ in lines 3-6. Suppose, $k = i$. If $j$ is the equivalent child of $i$, $j$ is the previous interval of $i$ and $j \sim i$. If $j$ is the compatible child of $i$, then $j = \mathsf{lc}(\mathsf{rc}(j))$. The algorithm $\mathsf{insertLT}$ adds the edges $(j,i)$ to $E'$ in lines 7-12. In any case, the edge $(j,k) \in E'$.
  \qed
    \end{itemize}

\end{proof}

\begin{lemma}\label{lem:remove-LT}
The operation $\mathsf{removeLT}(i)$ preserves  linearised tree data structure.
\end{lemma}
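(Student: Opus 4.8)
The plan is to mirror the structure of the proof of Lemma~\ref{lem:insert-LT}, but now with $I' = I \setminus \{i\}$, showing that the edge set $E'$ produced by $\mathsf{removeLT}(i)$ coincides with the edge set of $\mathcal{L}(I')$. The two auxiliary claims are still available: Claim~\ref{lem:greatest} characterises the greatest interval in an equivalence class via $\mathsf{lc}(\mathsf{rc}(\cdot))$, and the (easy) analogue of Claim~\ref{lem:diffLT} with the roles of $I$ and $I'$ swapped (if $(j,k)\in E'\setminus E$ then $j=i$ or $k=i$ — impossible here since $i\notin I'$ — so in fact $E'\setminus E$ can only contain edges created by redirection, and if $(j,k)\in E\setminus E'$ then $(j,i)\in E$, i.e. $j$ was a child of $i$). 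So the only edges that can change are: the edge out of $i$ (deleted, line 2), and the at most two edges into $i$, namely from the equivalent child $\mathsf{previous}(i)$ when $\mathsf{rc}(\mathsf{previous}(i)) = \mathsf{rc}(i)$, and from the compatible child $\mathsf{lc}(i)$ when $\mathsf{rc}(\mathsf{lc}(i)) = i$. Everything else in $E$ is untouched and remains correct in $\mathcal{L}(I')$ because removing $i$ does not change $\mathsf{rc}$ or the $\preceq$-order of any two intervals other than through $i$ itself.

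First I would handle the equivalent child $j = \mathsf{previous}(i)$ with $j\sim i$ in $I$. After removing $i$, the new parent of $j$ is determined by $\mathsf{rc}_{I'}(j)$, which equals the old $\mathsf{rc}_I(i)$ (call it $r$). There are two subcases matching lines 5--9 of Algorithm~\ref{alg:LT-remove}: if $j$ is still the greatest interval in $[j]_{I'}$, equivalently $j = \mathsf{lc}(r)$ after removal (the algorithm tests $i = \mathsf{lc}(\mathsf{rc}(i))$, i.e.\ $i$ was the unique interval between $j$ and $r$), then the edge $(j,r)\in E_c$ is correct; otherwise $\mathsf{next}_{I'}(j) = \mathsf{next}_I(i)$ is the next interval equivalent to $j$, and $(j, \mathsf{next}(i))\in E_\sim$ is correct. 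In each case one checks the algorithm performs exactly that relinking, after having cut the stale edge $(j,i)$.

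Next I would handle the compatible child $j = \mathsf{lc}(i)$ with $\mathsf{rc}_I(j) = i$, matching lines 11--22. Here $j$'s new parent after removal is $\mathsf{rc}_{I'}(j)$, the least interval compatible with $j$ other than $i$; removing $i$ may enlarge $[\mathsf{something}]$ — specifically the intervals that had $\mathsf{rc} = i$ now acquire a new right-compatible interval, and $j$ together with $k = \mathsf{next}_{I'}(j)$ may or may not become equivalent. The algorithm, after cutting $(j,i)$ and deleting $i$ from $T(I)$ so that the recomputed $\mathsf{rc}$ is the one in $I'$, tests whether $\mathsf{rc}(k) = \mathsf{rc}(j)$: if so $j$ is not greatest in its new class and $(j,k)\in E_\sim$; if not, $j$ is greatest and $(j,\mathsf{rc}(j))\in E_c$. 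I would verify that both links agree with $\mathcal{L}(I')$ using Claim~\ref{lem:greatest}. Finally, the outgoing edge of $i$: line 2 cuts it when $i$ is not the root, and since $i\notin I'$ that edge simply must not appear in $E'$ — done. Combining, every edge of $\mathcal{L}(I')$ lies in $E'$ and vice versa.

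The main obstacle I expect is bookkeeping the interaction between the two redirections — in particular making sure the recomputations of $\mathsf{rc}$, $\mathsf{lc}$, $\mathsf{next}$ and $\mathsf{previous}$ are evaluated in the interval tree $T(I)$ at the right moment relative to $\mathsf{remove}(T(I), i)$ (the algorithm deliberately deletes $i$ from $T(I)$ only after the equivalent-child relinking but before the compatible-child one, and the comment on line 18 flags exactly this subtlety), and checking that the equivalent child $j = \mathsf{previous}(i)$ and the compatible child $\mathsf{lc}(i)$ are genuinely distinct intervals so the two relinkings do not interfere. A secondary care point is the degenerate cases where $i$ is the root of $\mathcal{L}(I)$, where $[i]$ is a singleton, or where $\mathsf{previous}(i)$ / $\mathsf{lc}(i)$ do not exist; each is guarded by an explicit conditional in the algorithm and disposed of in a line.
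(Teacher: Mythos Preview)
Your proposal is correct and takes essentially the same approach as the paper: both invoke Claims~\ref{lem:greatest} and~\ref{lem:diffLT} to localise all changes to the edge out of $i$ and the (at most two) edges into $i$ from its equivalent and compatible children, and then verify case by case that the algorithm's relinking matches $\mathcal{L}(I\setminus\{i\})$. The paper packages this as a four-subcase biconditional (whether $(j,k)\in E'$, split by whether $(j,k)\in E$) while you organise it by enumerating the three redirections, but the content is the same; your explicit attention to the timing of $\mathsf{remove}(T(I),i)$ relative to the $\mathsf{rc}$ recomputations is, if anything, more careful than the paper's own treatment.
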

\begin{proof} Suppose $\mathcal{L}(I)=(I,E)$ is the linearised tree of a set $I$ of intervals and $(I \setminus \{i\},E')$ is the resulting tree after
the algorithm $\mathsf{removeLT}(i)$ is performed. Consider intervals $j$ and $k$ in $I$, where $j\preceq k$. We want to show that $(j,k) \in E'$ if and only if $(j,k)$ is an edge in $\mathcal{L}(I \setminus \{i\})$.

\smallskip
\noindent
$(\rightarrow)$ Suppose $(j,k) \in E'$. We prove that $(j,k)$ is an edge in $\mathcal{L}(I \setminus \{i\})$.
\begin{itemize}
\item Let $(j,k) \in E$. Assume that $(j,k)$ is not an edge in $\mathcal{L}(I \setminus \{i\})$. By Lemma~\ref{lem:diffLT}, either $j = i$ or $k = i$. If $j=i$, that is, $i$ is a child of $k$ in $\mathcal{L}(I)$, then the algorithm removes the edge $(i,k)$ in line 2. Consider the case when $k=i$, that is, $j$ is a child of $i$. If $j$ is the equivalent child of $i$, the algorithm removes the edge $(j,i)$ in lines 3-5. If $j$ is the compatible child of $i$, the algorithm removes $(j,i)$ in lines 10-12. In either case $(j,k) \not \in E'$, which is a contradiction.

\item Let $(j,k) \not \in E$. The algorithm $\mathsf{removeLT}$ must have added the edge $(j,k)$. There are four possible cases. First, the algorithm adds an edge in line 7, that is, $k = \mathsf{rc}(i)$. Then $j$ is the equivalent child of $i$ and $i$ is the greatest interval in $[i]$. After removing $i$, $j$ is the greatest interval in $[j]$, so that $j$ is the compatible child of $k$. Second, the algorithm adds an edge in line 9, that is, $k$ is the next interval of $i$. Then $i \sim k$. Since $j$ is the equivalent child of $i$, $j \sim k$. Third, the algorithm adds an edge in line 17. Then $k$ is the next interval of $j$ with respect to $I \setminus \{i\}$ and $j \sim k$. Finally, the algorithm adds an edge in line 19.  Then $j$ is the greatest interval in $[j]$  and $k = \mathsf{rc}(j)$ with respect to $I \setminus \{i\}$. In all these case the edge $(j,k)$ is an edge in $\mathcal{L}(I \setminus \{i\})$.

\end{itemize}

\noindent
$(\leftarrow)$ Suppose $(j,k)$ is an edge in $\mathcal{L}(I \setminus \{i\})$. We prove that $(j,k) \in E'$.
\begin{itemize}
\item Let $(j,k) \in E$. Assume that $(j,k) \not \in E'$. Then the algorithm $\mathsf{removeLT}$ must have deleted the edge $(j,k)$. First, the algorithm removes an edge from $i$ (see line 2). Second, it removes an edge from the equivalent child of $i$ (see lines 3-58). Finally, it removes an edge from the compatible child of $i$ (see lines 10-12). Thus the algorithms removes only edges, incident to $i$, but these edges are not in $\mathcal{L}(I \setminus \{i\})$, which is a contradiction.

\item Let $(j,k) \not \in E$. By Lemma~\ref{lem:diffLT} $(j,i)$ is an edge in $\mathcal{L}(I)$.
Suppose $j$ is the equivalent child of $i$. The algorithm finds $j$ in lines 3-5. If $j$ is the compatible child of $k$ in $\mathcal{L}(I \setminus \{i\})$, then $i$ is the compatible child of $k$ in $\mathcal{L}(I)$. If $j$ is the equivalent child of $k$, then $k$ is the next interval of $i$. The algorithm takes care of both cases in lines 6-9 and adds the edge $(j,k)$ in line~9.
Suppose $j$ is the compatible child of $i$. The algorithm finds $j$ in lines 10-11. If $j$ is the equivalent child of $k$, then $k$ is the next interval of $j$ and $\mathsf{rc}(j) = \mathsf{rc}(k)$ with respect to $I \setminus \{i\}$. The algorithm adds the edge $(j,k)$ in lines 15-17. If $j$ is the compatible child of $k$, then the algorithm adds the edge in line 19. Thus, $(j,k) \in E'$.\qed
\end{itemize}
\end{proof}

Lemmas~\ref{lem:query-LT}-\ref{lem:remove-LT}  lead us to the following theorem:

\begin{theorem}\label{thm:LT-amortized-time}
The  $\mathsf{queryLT}$, $\mathsf{insertLT}$ and $\mathsf{removeLT}$ operations
solve the dynamic monotonic interval scheduling problem in $O(\log n)$ amortised time,
where $n$ is the size of the set $I$ of intervals.
\end{theorem}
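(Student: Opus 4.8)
The plan is to split the theorem into its two halves. Correctness — that $\mathsf{queryLT}$, $\mathsf{insertLT}$, $\mathsf{removeLT}$ genuinely maintain the linearised tree and report membership in the greedy optimal set — is already in hand: Lemma~\ref{lem:insert-LT} and Lemma~\ref{lem:remove-LT} show that the two update routines turn $\mathcal{L}(I)$ into $\mathcal{L}(I')$ for the updated interval set $I'$, and Lemma~\ref{lem:query-LT} shows that $\mathsf{queryLT}(i)$ returns $\mathsf{true}$ precisely when $i$ lies in the greedy optimal set. So essentially all the work of the proof is the amortised $O(\log n)$ time bound, and the argument I would give for it is a bookkeeping argument layered on top of the amortised guarantees of the underlying data structures.

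First I would itemise the elementary work done by each operation. Every routine only touches the auxiliary interval tree $T(I)$ and the dynamic tree representing $\mathcal{L}(I)$. On $T(I)$ each routine performs a constant number of $\mathsf{insert}$/$\mathsf{remove}$/$\mathsf{minimum}$/$\mathsf{predecessor}$/$\mathsf{next}$/$\mathsf{previous}$ calls together with a constant number of $\mathsf{rc}(\cdot)$ and $\mathsf{lc}(\cdot)$ computations; since $I$ is monotonic throughout, Lemma~\ref{lem:rc} (and the symmetric statement for $\mathsf{left\_compatible}$) guarantees that each such call costs $\Theta(\log n)$ in the worst case, so the total $T(I)$-cost of any single operation is $O(\log n)$. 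On the dynamic tree side, $\mathsf{queryLT}$ issues one $\mathsf{expose}$ plus a constant number of $\mathsf{find}$/$\mathsf{predecessor}$ calls on the exposed path, while $\mathsf{insertLT}$ and $\mathsf{removeLT}$ issue only a constant number of $\mathsf{link}$ and $\mathsf{cut}$ operations.

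The crux — and the step I expect to require the most (though still modest) care — is the claim that an update disturbs the dynamic tree only a constant number of times. Reading $\mathsf{insertLT}$, it adds at most one outgoing parent edge of $i$ (either $(i,\mathsf{rc}(i))\in E_c$ or $(i,\mathsf{next}(i))\in E_\sim$, the choice being governed by Claim~\ref{lem:greatest}) and redirects at most the outgoing edges of the equivalent child and the compatible child of $i$; dually, $\mathsf{removeLT}$ deletes the outgoing edge of $i$ and redirects at most the outgoing edges of the equivalent and compatible children of $i$. That these are the only edges that can change between $\mathcal{L}(I)$ and $\mathcal{L}(I\cup\{i\})$ is exactly the content of Claim~\ref{lem:diffLT}: every newly created edge is incident to $i$, and every deleted edge was the old parent edge of a child of $i$. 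Hence each update triggers only $O(1)$ calls to $\mathsf{link}$/$\mathsf{cut}$, in sharp contrast to $\mathsf{CF}$, where a single update could redirect $\Theta(n)$ edges.

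Finally I would invoke the Sleator--Tarjan bounds~\cite{tarjan83}: $\mathsf{link}$, $\mathsf{cut}$ and $\mathsf{expose}$ each run in $O(\log n)$ amortised time against one potential function on the whole dynamic forest, independently of how the calls are interleaved. Since in a sequence $o_1,\dots,o_m$ of operations each $o_t$ contributes only $O(1)$ such dynamic-tree calls together with $O(\log n)$ deterministic work on $T(I)$, telescoping the potential yields a total running time of $O(m\log n)$, i.e. $O(\log n)$ amortised per operation, for $\mathsf{queryLT}$, $\mathsf{insertLT}$ and $\mathsf{removeLT}$ alike. I would close by remarking that it is precisely the locality guaranteed by Claim~\ref{lem:diffLT} that removes the $\log^2 n$ obstruction encountered for $\mathsf{CF}$ and lets the direct dynamic-tree analysis go through.
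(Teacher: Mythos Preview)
Your proposal is correct and follows the same route as the paper: correctness is delegated to Lemmas~\ref{lem:query-LT}--\ref{lem:remove-LT}, and the time bound follows because each operation issues only $O(1)$ calls to $T(I)$ and to $\mathsf{link}/\mathsf{cut}/\mathsf{expose}$, each of which is $O(\log n)$ amortised by~\cite{tarjan83}. In fact your write-up is more explicit than the paper's, which simply states that the lemmas ``lead us to'' the theorem without spelling out the complexity bookkeeping; your use of Claim~\ref{lem:diffLT} to justify the $O(1)$ edge-redirection count is exactly the intended mechanism.
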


\noindent {\bf Note}. The time complexity of the operations above depends on the type of dynamic trees, representing paths of~$\mathsf{LT}$. We can achieve the worst-case bound instead of amortized if we use globally biased trees instead of splay trees~\cite{tarjan83}. However, after each operation we must ensure that for every pair of edges $(v, u)$ and $(w, u)$ of the linearised tree, nodes $v$ and $u$ are in the same dynamic tree if and only if the numbers of nodes in the subtree rooter at $v$ is greater or equal to the number of nodes in the subtree rooted at~$u$.

\section{Experimental results}\label{sec:experiment}

\graphicspath{
	{../old/plots/experiments/1/graph1/},
	{../old/plots/experiments/1/graph2/},
	{../old/plots/experiments/2/graph1/},
	{../old/plots/experiments/2/graph2/}
}

In this section we present an experimental comparison between three algorithms for solving monotonic case of the dynamic interval scheduling problem: (i) the naive dynamic algorithm $\mathsf{N}$, which keeps the intervals a binary search and calculate the greedy optimal set from scratch at every query operation; (ii) the algorithm $\mathsf{CF}$ based on the compatibility forest; (iii) the algorithm $\mathsf{LT}$ based on the linearised tree. We implemented these algorithms in Java. The algorithm $\mathsf{N}$ is based on the standard Java implementation of Red-Black tree, which we extended with  ${\mathsf{left\_compatible}}$ and ${\mathsf{right\_compatible}}$ operations. We use the implementation of $\mathsf{N}$ in the algorithms $\mathsf{CF}$ and $\mathsf{LT}$ to store intervals and perform tree operations. In $\mathsf{CF}$ and $\mathsf{LT}$ we implemented bottom-up splay operation as described in \cite{tarjan83}. We run the experiments on a laptop with {\em 4GB of RAM} memory and {\em Intel Core 2 Duo 2130 Mhz, 3MB of L2 cache memory} processor.

In our experiments, we measure the total and the average running time of a sequence of $m$ operations on initially empty interval set. The sequence consists of $n$ insert operations, $rn$ remove operations and $qn$ query operation, where $n$ is a linearly increasing number and $r$ and $q$ are fixed parameters of the experiment. We create a sequence of operations randomly while satisfying two conditions. First,  whenever we invoke an insert operation of an interval $i$, we make sure that there is no interval $i$ in the set. Second, whenever we invoke a remove operation of $i$, we make sure that $i$ exists in the set. Thus every update operation calls for an actual change of the interval set.

To better understand the algorithms' performance, we defined the {\em sparsity} of an interval set $I$ to be the upper bound on the ratio between the size of the greedy optimal set $J$ and the size of $I$. The smaller the sparsity, the more intervals pairwise overlap. For example, if the sparsity is $1/2$, we make sure by creating intervals of the length $2/n$ that at most every second interval can belong to $J$.

The sparsity of $I$ has an important influence on the algorithms $\mathsf{N}$ and $\mathsf{CF}$. In the compatibility forest we conclude every update operation with the expose operation on the least interval in the set, which restores the missing edges between intervals from $J$. Therefore the smaller sparsity, the smaller chance of an update operation to affects the splay tree, representing set $J$. In the naive algorithm, the query operation may visit every interval from $J$. Therefore the smaller sparsity, the less maximal number of intervals the query operation may visit.

\smallskip

\noindent {\em Experiment 1.} The analysis of the algorithms shows that $\mathsf{N}$ updates the interval set faster than $\mathsf{CF}$ and $\mathsf{LT}$, but queries the set slower. Therefore in the first experiment we measured the efficiency of the algorithms undergoing $n$ insert, $0.5n$ remove and $0.01n$ query operations. The operations are shuffled as described above. We set the sparsity parameter to be $0.1$. The result of the experiment is shown on the Figure \ref{fig:slownaive}.

\begin{figure}[htb]	
	\includegraphics[] {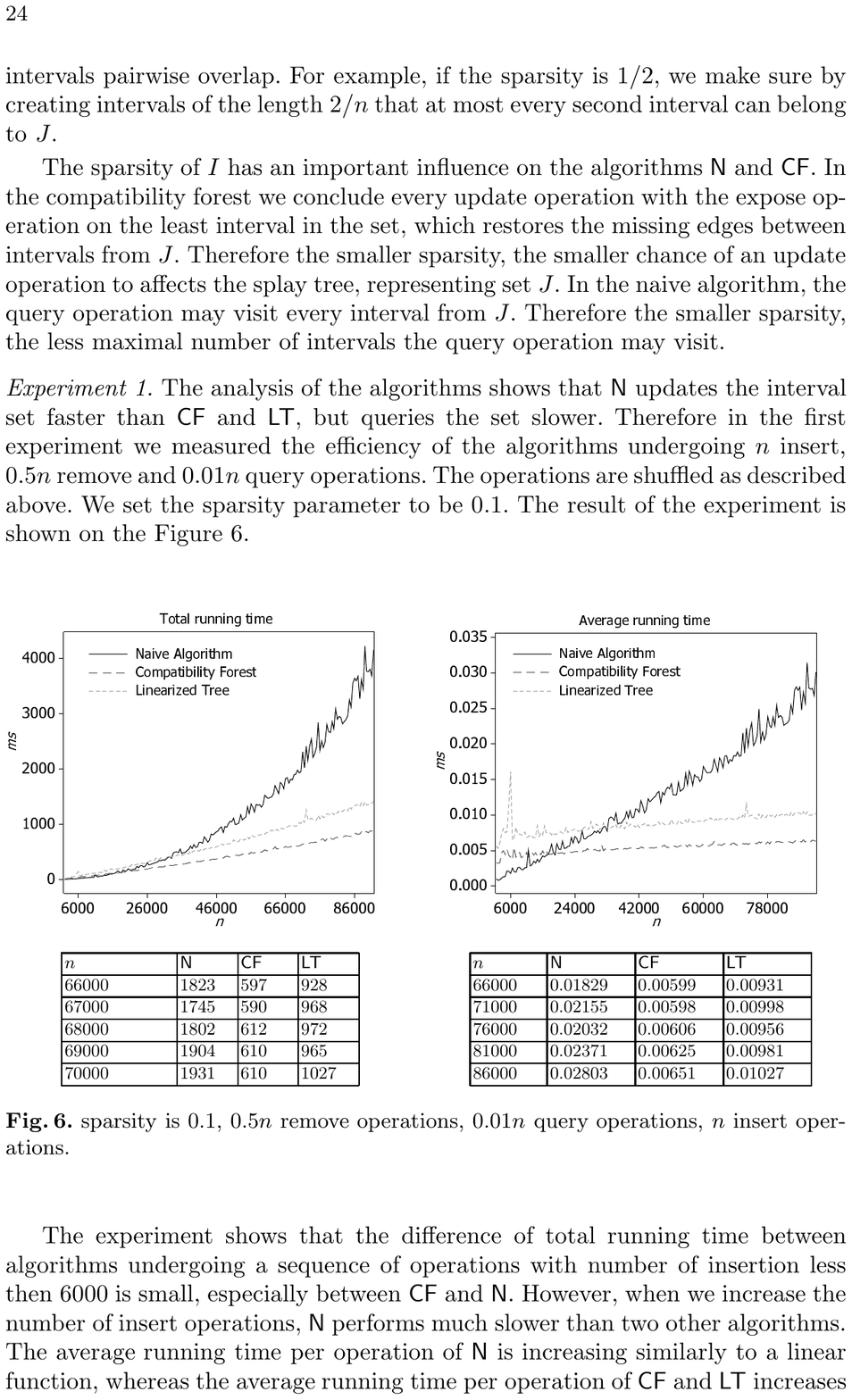}
	\caption{ sparsity is $0.1$, $0.5n$ remove operations, $0.01n$ query operations, $n$ insert operations.}
	\label{fig:slownaive}

\end{figure}

The experiment shows that the difference of total running time between algorithms undergoing a sequence of operations with number of insertion less then 6000 is small, especially between $\mathsf{CF}$ and $\mathsf{N}$. However, when we increase the number of insert operations, $\mathsf{N}$ performs much slower than two other algorithms. The average running time per operation of $\mathsf{N}$ is increasing similarly to a linear function, whereas the average running time per operation of $\mathsf{CF}$ and $\mathsf{LT}$ increases much slower. The experiment also shows that $\mathsf{CF}$ updates the interval set with low sparsity faster than $\mathsf{LT}$.

\smallskip

\noindent {\em Experiments 2 and 3}. In the next two experiments we measure the performance of $\mathsf{CF}$ and $\mathsf{LT}$ undergoing a sequence of operations with the equal number of insert and query operations. We excluded $\mathsf{N}$ from the experiments because $\mathsf{N}$ performs too slowly when the number of query operations increases. The difference between the second and the third experiment is in the number of remove operations. Sequences in Experiment~2 do not contain remove operations. Sequences in Experiment~3 contain $0.5n$ remove operations. We set the sparsity parameter to be $0.8$. Figure \ref{fig:badlt} shows the results of Experiment~2, Figure \ref{fig:similar} shows the results of Experiment~3.

\begin{figure}[htb]
	\includegraphics[] {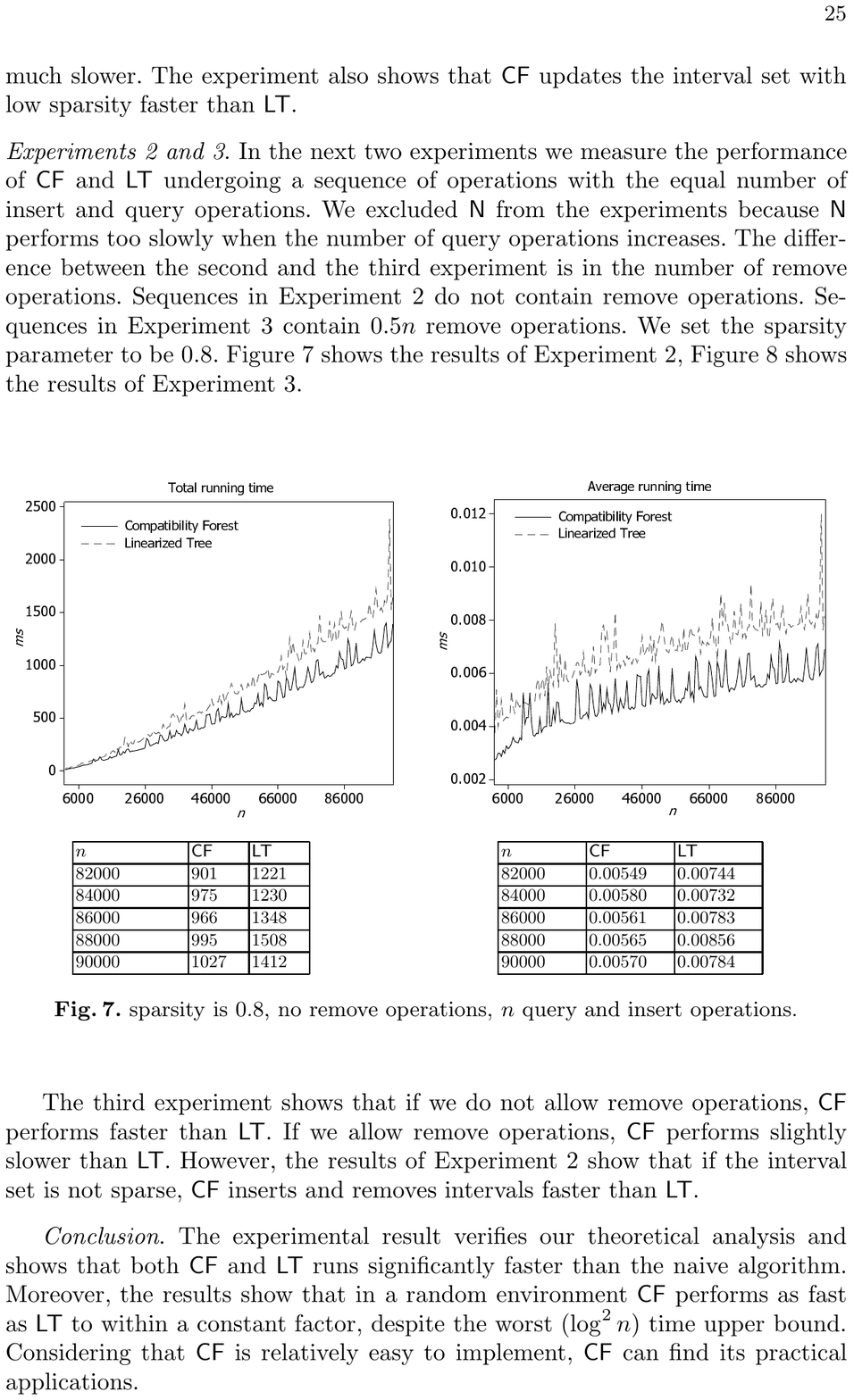}
	\caption{ sparsity is $0.8$, no remove operations, $n$ query and insert operations.}
	\label{fig:badlt}
\end{figure}

\begin{figure}[htb]
	\includegraphics[] {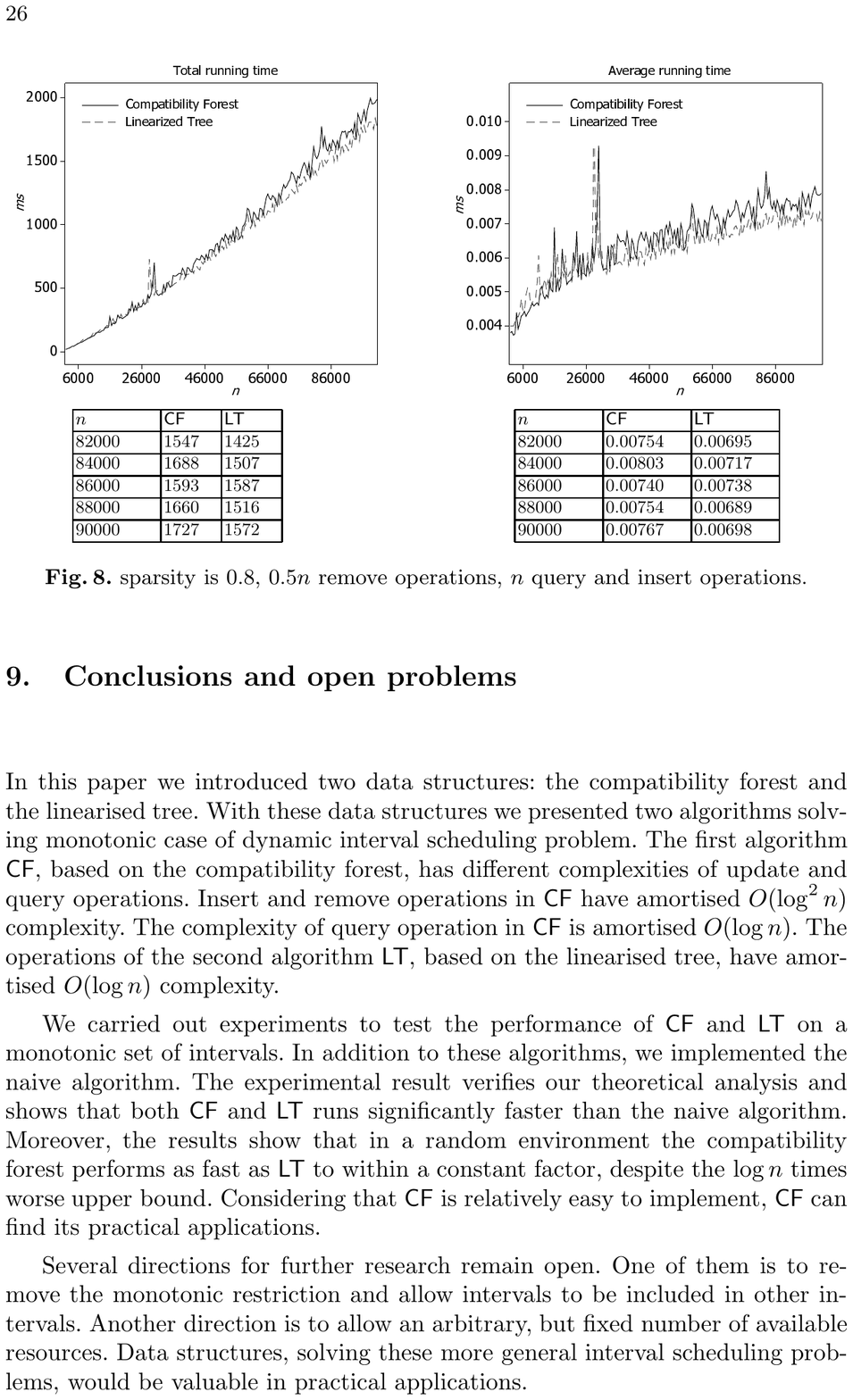}
	\caption{ sparsity is $0.8$, $0.5n$ remove operations, $n$ query and insert operations.}
	\label{fig:similar}
\end{figure}

The third experiment shows that if we do not allow remove operations, $\mathsf{CF}$  performs faster than $\mathsf{LT}$. If we allow remove operations, $\mathsf{CF}$ performs slightly slower than $\mathsf{LT}$. However, the results of Experiment~2 show that if the interval set is not sparse, $\mathsf{CF}$ inserts and removes intervals faster than $\mathsf{LT}$.

\smallskip

{\em Conclusion}. The experimental result verifies our theoretical analysis and shows that both $\mathsf{CF}$ and $\mathsf{LT}$ runs significantly faster than the naive algorithm. Moreover, the results show that in a random environment $\mathsf{CF}$ performs as fast as $\mathsf{LT}$ to within a constant factor, despite the worst ($\log^{2} n$) time upper bound. Considering that $\mathsf{CF}$ is relatively easy to implement, $\mathsf{CF}$ can find its practical applications.

\newpage

\thebibliography{99}


\bibitem{piglinear2004}
  D.\,Corneil.
  A simple 3-sweep LBFS algorithm for the recognition of unit interval graphs.
  {\em Discrete Applied Mathematics}, 2004, 138(3), 371-379.


\bibitem{piglinear1996}
  X.\,Deng, P.\,Hell and J.\,Huang.
  Linear-time representation algorithms for proper circular-arc graphs and proper interval graphs.
  {\em SIAM Journal on Computing}, 1996, 25(2), 390-403.

\bibitem{fpz}
	S.\,Fung, C.\,Poon, and F.\,Zheng.
	Online interval scheduling: Randomized and Multiprocessor cases.
	{\em Proceedings of COCOON}, 176-186, 2007.

\bibitem{pset}
	M.J.\,Katz, F.\,Nielsen and M.\,Segal.
	Maintenance of a piercing set for intervals with applications.
	{\em Algorithmica}, 36(1), 2003, 59-73

\bibitem{lt}
	R.\,Lipton and A.\,Tompkins.
	Online interval scheduling.
	{\em Proceedings of the Fifth Annual ACM-SIAM Symposium on Discrete Algorithms}, 1994, 302-311.

\bibitem{newpig}
  P.\,Heggernes, D.\,Meister, and C.\,Papadopoulos.
  A new representation of proper interval graphs with an application to clique-width.
  {\em Electronic Notes in Discrete Mathematics}, 2009, 32, 27-34.

\bibitem{nested}
  H.\,Kaplan,E.\,Molad and R.\,Tarjan.
  Dynamic rectangular intersection with priorities.
  {\em Proceedings of the thirty-fifth annual ACM symposium on Theory of computing}, 2003, June, 639-648.

\bibitem{kleinberg}
  J.\,Kleinberg and E.\,Tardos. {\em Algorithm Design}, 2006.

\bibitem{survey07}
 A.\,Kolen, J.K.\,Lenstra, C.H.\,Papadimitriou, and F.C.\,Spieksma.
 Interval scheduling: A survey, {\em Naval Research Logistics}, 54, 5, 2007, 530--543.

\bibitem{i_tree}
Mehlhorn, K. {\em Data structures and algorithms, Volume 3: Multi-dimensional Searching and Computational Geometry}. Springer-Verlag, Berlin, 1984.

\bibitem{tarjan83}
  D.\,Sleator and R.\,Tarjan.
  A Data Structure for Dynamic Trees,
  {\em Journal of computer and system sciences}, 26, 3, 1983, 362--391.

\bibitem{tarjan85}
  D.\,Sleator and R.\,Tarjan.
  Self-adjusting binary search trees,
  {\em Journal of the ACM}, 32, 3, 1985, 652--686.

\end{document}